%
\documentclass[runningheads]{llncs}
\usepackage{graphicx}
\usepackage{amsmath}
\usepackage[capitalise]{cleveref}%
\usepackage{float}
\usepackage{amsfonts}
\usepackage{tikz}
\usetikzlibrary{automata}
\usepackage{tikzscale}
\usepackage{mathrsfs}
\usepackage{amssymb}
\usepackage{adjustbox}
\usepackage{breqn}

\usepackage[shortlabels]{enumitem}

\usepackage{flafter}
\usepackage{mathtools}
\usetikzlibrary{arrows,decorations.pathreplacing,backgrounds,calc,positioning, fit, shapes.geometric}
\allowdisplaybreaks
\crefname{case}{Case}{Cases}
\usepackage[ruled,vlined,linesnumbered]{algorithm2e}
\usepackage{comment}
\usepackage{caption}
\usepackage{subcaption}
\usepackage{multirow}

\usepackage[T1]{fontenc}
%
\usepackage{graphicx}
%
%
\begin{document}
\title{Popularity on the 3D-Euclidean Stable Roommates
}
%
%
\author{Steven Ge
	\orcidID{0000-0001-5073-748X}  
	\and
	Toshiya Itoh
	\orcidID{0000-0002-1149-7046}}
\authorrunning{S. Ge and T. Itoh}
%
\institute{Tokyo Institute of Technology, Meguro, Japan\\
	\email{ge.s.aa@m.titech.ac.jp}\\
	\email{titoh@c.titech.ac.jp}}
\maketitle              
\begin{abstract}
	We study the 3D-Euclidean Multidimensional Stable Roommates problem, which asks whether a given set $V$ of $s\cdot n$ agents with a location in 3-dimensional Euclidean space can be partitioned into $n$ disjoint subsets $\pi = \{R_1 ,\dots , R_n\}$ with $|R_i| = s$ for each $R_i \in \pi$ such that $\pi$ is (strictly) popular, where $s$ is the room size. A partitioning is popular if there does not exist another partitioning in which more agents are better off than worse off. Computing a popular partition in a stable roommates game is NP-hard, even if the preferences are strict. The preference of an agent solely depends on the distance to its roommates. An agent prefers to be in a room where the sum of the distances to its roommates is small. We show that determining the existence of a strictly popular outcome in a 3D-Euclidean Multidimensional Stable Roommates game with room size $3$ is co-NP-hard.
	
	\keywords{Stable marriage problem \and Stable roommates problem \and Stable matching \and Popularity \and Coalition formation \and Co-NP-hard}
\end{abstract}
\section{Introduction}
The formation of stable coalitions in multi-agent systems is a computational problem that has several variations with different conditions on the coalition size. Arkin et al. \cite{ARKIN2009219} introduced a restricted variant of the stable roommates problem called the geometric stable roommates: each agent is associated with a point in $d$-dimensional metric space $\mathbb{R}^d$ and an agent prefers to be matched with an agent located close to it over an agent that is located far from it.


Chen and Roy \cite{chen2022multidimensional} investigated a restricted variant of the geometric stable roommates, called the Euclidean $d$-Dimensional Stable Roommates (Euclid-$d$-SR) problem, where every agent is associated with a point in 2-dimensional Euclidean space. The agents are partitioned into rooms of size $d$. Chen and Roy  \cite{chen2022multidimensional} found that determining the existence of a core stable outcome for $d \geq 3$ is NP-complete for the Euclid-$d$-SR problem.


Various notions that define the stability or optimality of a partitioning of agents exist. The notion of popularity was introduced by G{\"a}rdenfors \cite{Grdenfors1975MatchMA} in 1975. Popular matchings have been an exciting area of research \cite{Manlove2013AlgorithmicsOM}. Cseh \cite{cseh2017popular} has recently provided a survey on popular matchings. 

Computing a popular partition in a stable roommates game is NP-hard, even if the preferences are strict \cite{Cseh2021,doi:10.1137/1.9781611975482.173,10.1145/3442354}.
One method of obtaining tractability results from intractable coalition formation problems is to put restrictions on the agents’ preferences for which the associated computational problems become tractable. 
This approach has been successfully used in the stable roommates problem \cite{10.1007/978-3-540-77105-0_48,BARTHOLDI1986165,10.1007/978-3-642-13073-1_10,Bredereck2020,CHUNG2000206,10.1145/3434427}. Euclidean preferences is the same type of approach.


We study an adapted version of the Euclid-$d$-SR problem named the 3D-Euclidean Multidimensional Stable Roommates problem, where we consider 3-dimensional space instead of 2-dimensional, with the notion of popularity. We show that determining the existence of a strictly popular outcome in a 3D-Euclidean Multidimensional Stable Roommates game with room size $3$ is co-NP-hard.

Our hardness proofs are inspired by the work by Chen and Roy \cite{chen2022multidimensional} in combination with the work by Brandt and Bullinger \cite{10.1613/jair.1.13470}.
\section{Preliminaries}\label{prelim}
We use the notation and definitions related to the Euclidean $d$-Dimensional Stable Roommates problem described in the work of Chen and Roy \cite{chen2021multi}. Additionally, we also use the notation and definitions related to popularity described in the work of Brandt and Bullinger \cite{10.1613/jair.1.13470}. For our hardness results, we construct polynomial time reductions from the Planar and Cubic Exact Cover by 3-Sets (PC-X3C) problem. In order to ensure that this paper is self-contained, we shall include the relevant existing definitions in this section. 

For $m \in \mathbb{N}^+$, let $[m] = \{1, \dots, m\}$. Let $\epsilon$ denote a small fractional value with $0 < \epsilon < 0.001$ and let $d = \sqrt{ 1 - \left(\frac{\epsilon}{2}\right)^2}$.

\subsection{3D-Euclidean Multidimensional Stable Roommates}
A 3D-Euclidean Multidimensional Stable Roommates (3D-EuclidSR) game $G$ is a triple $(V, E, s)$, where the set $V = [s \cdot n]$ represents the set of agents, embedding $E: V \rightarrow \mathbb{R}^3$ denotes the position of an agent in 3-dimensional Euclidean space, and $s \in \mathbb{N}^+$ is the room size. For an agent $a \in V$, let $E(a)_x$, $E(a)_y$, and $E(a)_z$ denote the $x$, $y$, and $z$ coordinate of agent $a$ in the embedding $E$ respectively.

An $s$-sized subset of $V$ is called a room. An outcome $\pi = \left\{C_1, \dots, C_k\right\}$ of $G$ is a partitioning of the agents $V$ into $k$ rooms. Let $\pi(a)$ denote the room in $\pi$ that contains agent $a \in V$.

For $p, q \in V$, let $\delta(p,q) = \sqrt{\sum\limits_{i \in \{x,y,z\}}(E(p)_i - E(q)_i)^2}$ denote the Euclidean distance between agents $p$ and $q$. Let $a \in V$ be an agent and $R \subseteq V$ be a room such that $a \in R$, we overload the notation by defining $\delta(a, R) = \sum\limits_{a' \in R}\delta(a, a')$. 

Let $a \in V$ be an agent and $R, T \subseteq V$ be two rooms such that $a \in R$ and $a \in T$. We write $R \succsim_a T$ say that agent $a$  weakly prefers $R$ over $T$ if $a$ likes being in room $R$ at least as much as being in room $T$. Additionally, we write $R \succ_a T$ and say that agent $a$ strictly prefers room $R$ over room $T$ if $R \succsim_a T$ and $T \not\succsim_a R$. If agent $a$ weakly prefers $R$ over $T$ and $T$ over $R$, we write $R \sim_a T$ and say that $a$ is indifferent between $R$ and $T$.

The preference of an agent solely depend on its distance to its roommates. Particularly, an agent prefers being in a room with roommates that are positioned close to itself. Formally, let $a \in V$ be an agent and $R,T \subseteq V$ be two rooms such that $a \in R$ and $a \in T$. We have that $R \succsim_a T \iff \delta(a, R) \leq \delta(a, T).$

\subsection{Planar and Cubic Exact Cover by 3-Sets}
Let $X = [m]$, where $m \in \mathbb{N}^+$ and $m \mod 3 = 0$, and let $C = \left\{A_1, \dots, A_q\right\}$ be a collection of 3-element subsets of $X$. An instance of the X3C problem is a tuple $I = (X,C)$ and asks: does there exist $S \subseteq C$ such that $S$ partitions $X$? We call such an $S$ a solution of $I$.

The associated graph of $I$, denoted by $G(I)$, is a bipartite graph $(U \cup W, E)$ where $U = \{u_i | i \in X\}$, $W = \{w_j | A_j \in C\}$, and edge $\{u_i, w_j\} \in E$ if and only if $i \in A_j$. We call vertices in $U$ and $W$ the element-vertices and set-vertices, respectively. 

Instance $I$ is an instance of the PC-X3C problem, if each element $x \in X$ occurs in exactly three sets of $C$ and the associated graph is planar. Note that for a valid instance of the PC-X3C problem, we have that $|X| \geq 6$. Additionally, we have that $|X| = |C|$.

\subsection{Popularity}
Let $\pi$ and $\pi'$ be arbitrary outcomes. Let $N(\pi,\pi') = \left\{a \in N | \pi(a) \succ_a \pi'(a)\right\}$ and $\phi(\pi,\pi') = |N(\pi, \pi')| - |N(\pi',\pi)|$. We call $\phi(\pi,\pi')$ the popularity margin of $\pi$ and $\pi'$.

An outcome $\pi$ is strictly more popular than outcome $\pi'$ if $\phi(\pi,\pi') > 0$. An outcome $\pi$ is popular if for any outcome $\pi'$ we have $\phi(\pi,\pi') \geq 0$. An outcome $\pi$ is strictly popular if for any other outcome $\pi'\neq \pi$ we have $\phi(\pi,\pi')>0$.

\section{Strict Popularity}\label{strictpop}
In this section, we show that determining the existence of a strictly popular outcome is co-NP-hard, when the room size is fixed to 3. We shall construct a 3D-EuclidSR game $G = (V,E,3)$ from an X3C instance $I = (X,C)$ such that there exists a solution $S \subseteq C$ that partitions $X$ if and only if no strictly popular outcome exists for $G$. The 3D-EuclidSR game $G$ consists of 3 parts: the bottom layer, the top layer, and the ascending layer. We shall construct separate sets of agents $V_b,V_t,V_a$, and embeddings $E^b,E^t,E^a$ for the bottom layer, the top layer, and the ascending layer, respectively. Note that $V = V_b \cup V_t \cup V_a$ and for each $a \in V$ we have $E = \begin{cases}
	E^b(a) & \text{, if } a \in V_b \\
	E^t(a) & \text{, if } a \in V_t \\
	E^a(a) & \text{, if } a \in V_a
\end{cases}$.

\subsection{Construction}

\subsubsection{Bottom layer.}\label{strictpopconstr}
For the bottom layer, we use the fact that the associated graph $G(I)$ is planar and cubic. By Valiant \cite{6312176}, $G(I)$ admits a specific planar embedding in $\mathbb{Z}^2$, called orthogonal drawing, which maps each vertex $v \in U \cup W$ to an integer grid point and each edge to a chain of non-overlapping horizontal and vertical segments along the grid (except at the endpoints). A horizontal segment is a collection of consecutive horizontal edges of the grid $\mathbb{Z}^2$. A vertical segment is defined analogously.

We use the following more specific restricted orthogonal drawing $E^{G(I)}$ for $G(I)$:
\begin{proposition}\label{propgridgraph}
	(\cite{10.1137/S0097539794262847}) A planar graph with maximum vertex degree three can be embedded, in polynomial time, in the grid $\mathbb{Z}^2$ such that its vertices are at the integer grid points and its edges are drawn using at most one horizontal and one vertical segment in the grid.
\end{proposition}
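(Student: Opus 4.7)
The plan is to obtain the drawing incrementally via a canonical ordering of the planar graph, exploiting the fact that at every vertex of degree at most three at least one of the four axis-aligned directions (north, south, east, west) is always free for routing. I would begin by fixing an arbitrary combinatorial planar embedding of $G(I)$ (obtainable in linear time using, e.g., the Hopcroft–Tarjan planarity algorithm), triangulating the outer face only if needed, and then computing a canonical (shelling) ordering $v_1, v_2, \dots, v_{|U|+|W|}$ of the vertices such that the subgraph induced by the first $k$ vertices is 2-connected with $v_1, v_2$ on the outer face, and $v_{k+1}$ lies in the outer face of the current subgraph with its neighbors among the earlier vertices forming a contiguous arc on the outer boundary.

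Next, I would place the vertices on the grid inductively, maintaining the invariant that after step $k$ the drawn subgraph is an orthogonal embedding, every edge has at most one horizontal and one vertical segment, and for each vertex on the outer contour at least one free axis-aligned direction (the one "facing outward") remains unused. When $v_{k+1}$ is introduced, its at most three neighbors on the contour each supply one free direction; I would assign integer coordinates to $v_{k+1}$ so that one neighbor connects to it by a straight vertical segment and the remaining (at most two) neighbors connect using a single horizontal-plus-vertical bend, possibly stretching the existing drawing by shifting columns or rows to the right/up to create room. The degree-three assumption is essential here: with four axis-aligned ports at each vertex and at most three incident edges, I can always reserve one outward port for future connections, which is precisely the invariant that makes the induction go through.

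The main obstacle is the \emph{routing} step: inserting $v_{k+1}$ may require bending edges of vertices that are currently on the outer contour but whose one remaining free port points in the "wrong" direction relative to the intended position of $v_{k+1}$. I would resolve this using the standard shift method: duplicate a carefully chosen column (or row) of the grid and translate the right-hand portion of the drawing one unit further right, which preserves all existing orthogonal edges (each stretched segment remains horizontal or vertical), keeps every vertex on integer coordinates, and opens up the required space for the new one-bend edge. A careful accounting shows that the total number of shifts is $O(n)$ per inserted vertex, so the whole procedure runs in polynomial time and produces a grid whose side length is polynomial in $|U|+|W|$, yielding the claimed embedding.
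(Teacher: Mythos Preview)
The paper does not prove this proposition at all: it is stated with a citation to \cite{10.1137/S0097539794262847} and used as a black box. So there is no ``paper's own proof'' to compare against; your proposal is an attempt to supply an argument the authors deliberately outsourced.

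As a standalone sketch, your plan has a genuine internal tension that you do not resolve. Canonical (shelling) orderings in the sense you invoke require the graph to be internally triangulated (or at least triconnected), and you acknowledge this by saying you would ``triangulate the outer face only if needed.'' But triangulating---even partially---introduces new edges and will in general raise vertex degrees above three, which destroys exactly the hypothesis your port-counting argument relies on (``with four axis-aligned ports at each vertex and at most three incident edges, I can always reserve one outward port''). You cannot simultaneously assume the canonical-ordering machinery applies and that every vertex retains a free port. The cited result handles this by working with a canonical ordering tailored to biconnected (not triangulated) planar graphs and by treating the added dummy edges separately so they never consume a port of an original vertex; your sketch elides precisely this step.

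A second, smaller gap: when $v_{k+1}$ has two or three earlier neighbours on the contour, you assert that shifts suffice to make all connecting edges one-bend, but you do not argue why a single column/row duplication cannot force a \emph{second} bend on some already-drawn edge whose bend point lies in the shifted region. The standard argument needs the invariant that every existing edge's bend lies in a position that is preserved under the shift, and that invariant must be stated and maintained explicitly.
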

The intersection point of the horizontal and vertical segments is called a bending point. 

The planar graph in \cref{propgridgraph} can be constructed without any empty vertical nor empty horizontal grid lines that intersect the graph. By empty, we mean that the grid line does not intersect any vertices. If there is an empty vertical or empty horizontal grid line, the graph can be compressed by shifting the vertices to populate the empty space. See \cref{propgridgraphcompact} for an illustration. Note that a grid line cannot only intersect with an edge of the graph, as otherwise it would have at least 2 bending points, thus have more than one horizontal or one vertical segment. We shall assume that our orthogonal graph drawing $E^{G(I)}$ does not have any empty vertical nor empty horizontal grid lines that intersect the graph.

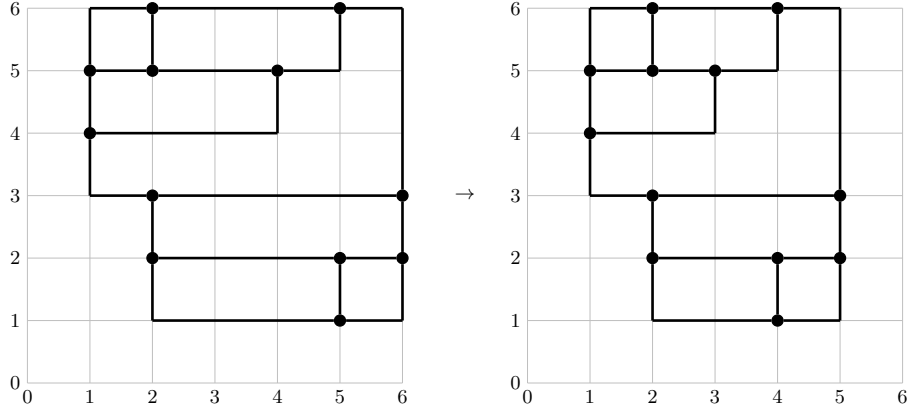
\begin{figure}
	\newcommand*{\xMin}{0}%
\newcommand*{\xMax}{6}%
\newcommand*{\yMin}{0}%
\newcommand*{\yMax}{6}%

\resizebox{\textwidth}{!}{%
	\begin{tikzpicture}[-,auto, semithick]
		\tikzset{
			element/.style={
				fill=black,draw=black,text=black,shape=circle,inner sep=1pt,minimum size=5pt
			},
		}
	
		\foreach \i in {\xMin,...,\xMax} {
			\draw [very thin,lightgray] (\i,\yMin) -- (\i,\yMax)  node[black] [below] at (\i,\yMin) {$\i$};
		}
		\foreach \i in {\yMin,...,\yMax} {
			\draw [very thin,lightgray] (\xMin,\i) -- (\xMax,\i) node[black] [left] at (\xMin,\i) {$\i$};
		}
	
		\node[element] (1) at (2,6) {};
		\node[element] (2) at (5,6) {};
		
		\node[element] (3) at (1,5) {};
		\node[element] (4) at (2,5) {};
		\node[element] (5) at (4,5) {};
		
		\node[element] (6) at (1,4) {};
		
		\node[element] (7) at (2,3) {};
		\node[element] (8) at (6,3) {};
		
		\node[element] (9) at (2,2) {};
		\node[element] (10) at (5,2) {};
		\node[element] (11) at (6,2) {};
		
		\node[element] (12) at (5,1) {};
		
		\path (1) edge[very thick] (2);
		\path (1) edge[very thick] (1,6);
		\path (3) edge[very thick] (1,6);
		\path (1) edge[very thick] (4);
		
		\path (2) edge[very thick] (6,6);
		\path (8) edge[very thick] (6,6);
		\path (2) edge[very thick] (5,5);
		\path (5) edge[very thick] (5,5);
		
		\path (3) edge[very thick] (4);
		\path (3) edge[very thick] (6);
		
		\path (4) edge[very thick] (5);
		
		\path (5) edge[very thick] (4,4);
		\path (6) edge[very thick] (4,4);
		
		\path (6) edge[very thick] (1,3);
		\path (7) edge[very thick] (1,3);
		
		\path (7) edge[very thick] (8);
		\path (7) edge[very thick] (9);
		
		\path (8) edge[very thick] (11);
		
		\path (9) edge[very thick] (10);
		\path (9) edge[very thick] (2,1);
		\path (12) edge[very thick] (2,1);
		
		\path (10) edge[very thick] (11);
		\path (10) edge[very thick] (12);
		
		\path (11) edge[very thick] (6,1);
		\path (12) edge[very thick] (6,1);
		

		\node at (7, 3) {$\rightarrow$};

		\foreach \i in {\xMin,...,\xMax} {
			\draw [very thin,lightgray] (\i+8,\yMin) -- (\i+8,\yMax)  node[black] [below] at (\i+8,\yMin) {$\i$};
		}
		\foreach \i in {\yMin,...,\yMax} {
			\draw [very thin,lightgray] (\xMin+8,\i) -- (\xMax+8,\i) node[black] [left] at (\xMin+8,\i) {$\i$};
		}

	\node[element] (1) at (2+8,6) {};
	\node[element] (2) at (5+8-1,6) {};
	
	\node[element] (3) at (1+8,5) {};
	\node[element] (4) at (2+8,5) {};
	\node[element] (5) at (4+8-1,5) {};
	
	\node[element] (6) at (1+8,4) {};
	
	\node[element] (7) at (2+8,3) {};
	\node[element] (8) at (6+8-1,3) {};
	
	\node[element] (9) at (2+8,2) {};
	\node[element] (10) at (5+8-1,2) {};
	\node[element] (11) at (6+8-1,2) {};
	
	\node[element] (12) at (5+8-1,1) {};
	
	\path (1) edge[very thick] (2);
	\path (1) edge[very thick] (1+8,6);
	\path (3) edge[very thick] (1+8,6);
	\path (1) edge[very thick] (4);
	
	\path (2) edge[very thick] (6+8-1,6);
	\path (8) edge[very thick] (6+8-1,6);
	\path (2) edge[very thick] (5+8-1,5);
	\path (5) edge[very thick] (5+8-1,5);
	
	\path (3) edge[very thick] (4);
	\path (3) edge[very thick] (6);
	
	\path (4) edge[very thick] (5);
	
	\path (5) edge[very thick] (4+8-1,4);
	\path (6) edge[very thick] (4+8-1,4);
	
	\path (6) edge[very thick] (1+8,3);
	\path (7) edge[very thick] (1+8,3);
	
	\path (7) edge[very thick] (8);
	\path (7) edge[very thick] (9);
	
	\path (8) edge[very thick] (11);
	
	\path (9) edge[very thick] (10);
	\path (9) edge[very thick] (2+8,1);
	\path (12) edge[very thick] (2+8,1);
	
	\path (10) edge[very thick] (11);
	\path (10) edge[very thick] (12);
	
	\path (11) edge[very thick] (6+8-1,1);
	\path (12) edge[very thick] (6+8-1,1);
	\end{tikzpicture}
}
	\caption{Example 3-regular orthogonal graph drawing, where each edge has at most one vertical and horizontal segment. The vertical grid line $x=3$, which intersects the graph, is initially empty.}
	\label{propgridgraphcompact}
\end{figure}

In the orthogonal drawing $E^{G(I)}$, the edges of the grid $\mathbb{Z}^2$ are set to length $10$. Since in $E^{G(I)}$ there are no empty vertical nor empty horizontal grid lines that intersect the graph, the height and width of the graph in the embedding is at most $10 \cdot (|X| + |C|)$. See \cref{gridgraphfrompcx3c} for an example orthogonal drawing.

\begin{figure}
	\resizebox{\textwidth}{!}{%
\begin{tikzpicture}[-,auto, semithick]
	\tikzset{
		element/.style={
			fill=black,draw=black,text=black,shape=circle,inner sep=1pt,minimum size=1pt
		},
		set/.style={
			fill=black,draw=black,text=black,shape=circle,inner sep=1pt,minimum size=1pt
		}
	}
	
	\node[element, label=above right:{$u_1$}] 		   	(1) at (0,-2.5) {};
	\node[set, label=below:{$\{1,2,3\}$}] 	(2) at (2.5,-2.5) {};
	\node[set, label=above right:{$w_1$}] 	 at (2.5,-2.5) {};
	\node[set, label=above left:{$u_3$}] 		   	(3) at (5,-2.5) {};
	\node[set, label=above left:{$\{1,3,5\}$}] 	(4) at (5,-5) {};
	\node[set, label=above right:{$w_3$}] 	 at (5,-5) {};
	\node[set, label=above left:{$u_2$}] 		   	(5) at (2.5,0) {};
	\node[set, label=below right:{$\{1,2,4\}$}]		(6) at (2.5,2.5) {};
	\node[set, label=below left:{$w_2$}]		 at (2.5,2.5) {};
	\node[set, label=below:{$\{2,4,6\}$}]	(7) at (5,0) {};
	\node[set, label=above left:{$w_4$}]	 at (5,0) {};
	\node[set, label=below left:{$u_4$}]				(8) at (5,2.5) {};
	\node[set, label=above left:{$u_6$}]			(9) at (7.5,0) {};
	\node[set, label=below right:{$\{4,5,6\}$}]		(10) at (7.5,2.5) {};
	\node[set, label=below left:{$w_6$}]		 at (7.5,2.5) {};
	\node[set, label=below left:{$\{3,5,6\}$}]	(11) at (10,0) {};
	\node[set, label=above:{$w_5$}]	 at (10,0) {};
	\node[set, label=above left:{$u_5$}]			(12) at (12.5,0) {};
	
	\path (1) edge (2);
	\path (2) edge (3);
	\path (4) edge (3);
	\path (4) edge (0,-5);
	\path (1) edge (0,-5);
	\path (5) edge (2);
	\path (5) edge (6);
	\path (1) edge (0,2.5);
	\path (6) edge (0,2.5);
	\path (5) edge (7);
	\path (7) edge (8);
	\path (6) edge (8);
	\path (7) edge (9);
	\path (10) edge (9);
	\path (10) edge (8);
	\path (11) edge (9);
	\path (11) edge (10,-2.5);
	\path (3) edge (10,-2.5);
	\path (11) edge (12);
	\path (12) edge (12.5,-5);
	\path (4) edge (12.5,-5);
	\path (12) edge (12.5,2.5);
	\path (10) edge (12.5,2.5);
	
	\draw [decorate,
	decoration = {brace}] (2.5,2.85) --  (5,2.85);
	\node[] at (3.75,3.25) {length $ = 10$};
\end{tikzpicture}
}
	\caption{Example orthogonal drawing $E^{G(I)}$, where $I = (X, C)$, $X = \{1, 2, 3, 4, 5, 6\}$, and $C = \{\{1, 2, 3\}, \{1, 2, 4\}, \{1, 3, 5\}, \{2, 4, 6\}, \{3, 5, 6\}, \{4, 5, 6\}\}$.}
	\label{gridgraphfrompcx3c}
\end{figure}
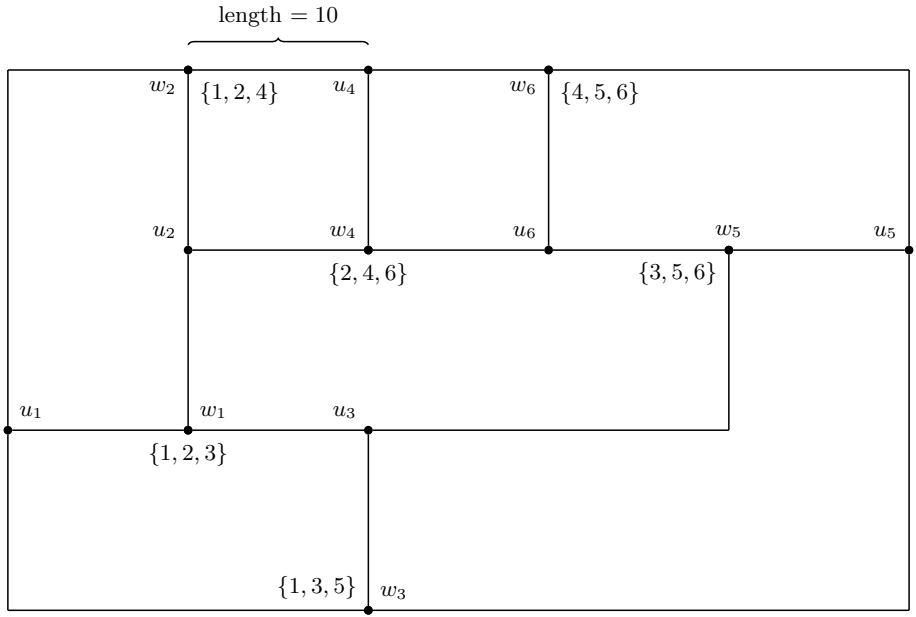

For each element-vertex $u_i \in U$, we create an agent $u_i$ for $V_b$, which we call an element-agent, with the same $xy$-coordinates as the element-vertex $u_i$ and $z$-coordinate $0$ for embedding $E^b$.

Additionally, for each edge $\{u_i, w_j\}$ in $G(I)$, if $\{u_i, w_j\}$ has a bending point in $E^{G(I)}$, we create an agent $b_j^i$ for $V_b$, which we call a bending point agent, with the same $xy-coordinates$ as the bending point and $z$-coordinate $0$ for embedding $E^b$. Let $B = \{b_j^i | \{u_i, w_j\}\text{ in }G(I) \wedge \{u_i, w_j\}\text{ has bending point in } E^{G(I)}\}$ denote the set of all bending point agents.

Finally, for each set-vertex $w_j \in W$ we create three agents $w_j^i$ for $V_b$, where $i\in C_j$, which form an equilateral triangle with edge length $1$. Specifically, for a set-vertex $w_j \in W$, we can assume w.l.o.g. that it has three connecting edges going upwards, leftward, and rightward connecting to element-vertices $u_i, u_p$, and $u_q$, respectively. If this is not the case, we can rotate the coordinate system and apply the same reasoning. We create three set-agents $w_j^i, w_j^p, w_j^q \in V_b$ to replace $w_j$. The set-agents $w_j^i, w_j^p, w_j^q$ are embedded such that they are on the segment of the upward, leftward, and rightward edge, respectively, and are of equidistance $1$ to each other. As previous agents, each of these set-agents have $z$-coordinate $0$. Let $W' = \{w_j^i, w_j^p, w_j^q | C_j = \{i,p,q\} \in C$ be the set of set-agents. See \cref{set-vertexgadget} for an illustration.

\begin{figure}
	\resizebox{\textwidth}{!}{%
\begin{tikzpicture}[-,auto, semithick]
	\tikzset{
		element/.style={
			fill=black,draw=black,text=black,shape=circle,inner sep=1pt,minimum size=5pt
		},
		set/.style={
			fill=white,draw=white,text=black,shape=circle
		},
	}
	
	\node[element, label=above left:{$w_j$}] 	(1) at (0,0) {};
	\node[set] 	(2) at (0,2.5) {$u_i$};
	\node[set] 	(3) at (-2.5,0) {$u_p$};
	\node[set] 	(4) at (2.5,0) {$u_q$};
	
	\path (1) edge (2);
	\path (1) edge (3);
	\path (1) edge (4);
	
	\node[set]  at (3.75,1.25) {$\rightarrow$};
	
	\node[element, label=above left:{$w_j^i$}] 	(1-1) at (7.5,0.87) {};
	\node[element, label=above left:{$w_j^p$}] 	(1-2) at (7.5-0.5,0) {};
	\node[element, label=above right:{$w_j^q$}] 	(1-3) at (7.5+0.5,0) {};
	\node[set] 									(2) at (7.5,2.5) {$u_i$};
	\node[set] 									(3) at (5,0) {$u_p$};
	\node[set] 									(4) at (10,0) {$u_q$};
	
	\path (1-1) edge (2);
	\path (1-2) edge (3);
	\path (1-3) edge (4);
\end{tikzpicture}
}
	\caption{Gadget for set-vertex $w_j$, where $C_j = \{i,p,q\}$ and $w_j^i, w_j^p, w_j^q$ form equilateral triangle with edge length $1$.}
	\label{set-vertexgadget}
\end{figure}
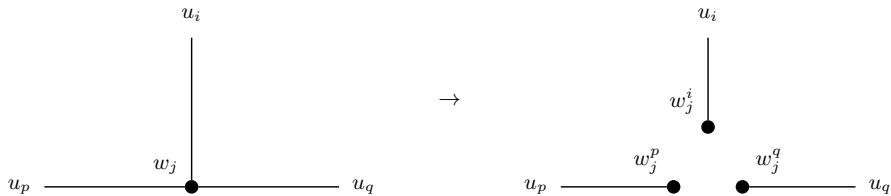

Let us modify $G(I)$ to create the graph $G'(I) = (V', E')$. Let $V'$ be the vertices consisting of all element-agents, set-agents, and bending point agents, i.e., $V' = U \cup B \cup W'$. For each edge $\{u_i, w_j\}$ in $G(I)$, if $\{u_i, w_j\}$ has a bending point in $E^{G(I)}$, then we create the edges $\{u_i, b_j^i\}, \{b_j^i, w_j^i\}$ and place them in $E'$. Otherwise we create the edge $\{u_i, w_j^i\}$ and place it in $E'$. Let us consider the embedding $E^{G'(I)}: V' \rightarrow \mathbb{Z}^2$, that maps the the vertices in $V'$ to the same $x, y$-coordinates that embedding $E^b$ assigns. That is, for each $v \in V'$ we have $E^{G'(I)}(v) = (E^b_x(v), E^b_y(v))$. See \cref{gridgraphfrompcx3c2} for an example.

\begin{figure}
	\resizebox{\textwidth}{!}{%
	\begin{tikzpicture}[-,auto, semithick]
		\tikzset{
			element/.style={
				fill=black,draw=black,text=black,shape=circle,inner sep=1pt,minimum size=1pt
			},
			set/.style={
				fill=black,draw=black,text=black,shape=circle,inner sep=1pt,minimum size=1pt
			}
		}
		
		\node[element, label=above right:{$u_1$}] 	(1) at (0,-2.5) {};
		
		\node[set, label=below left:{$w_1^1$}] 		(2-1) at (2.5-0.2,-2.5) {};
		\node[set, label=below right:{$w_1^3$}] 	(2-2) at (2.5+0.2,-2.5) {};
		\node[set, label=above left:{$w_1^2$}] 		(2-3) at (2.5,-2.5+0.35){};
		
		\node[set, label=above left:{$u_3$}] 		(3) at (5,-2.5) {};
		
		\node[set, label=above left:{$w_3^1$}] 	 (4-1) at (5-0.2,-5) {};
		\node[set, label=above right:{$w_3^5$}] 	 (4-2) at (5+0.2,-5) {};
		\node[set, label=above right:{$w_3^3$}] 	 (4-3) at (5,-5+0.35) {};
		
		\node[set, label=above left:{$u_2$}] 		(5) at (2.5,0) {};
		
		\node[set, label=below left:{$w_2^1$}]		(6-1) at (2.5-0.2,2.5) {};
		\node[set, label=below right:{$w_2^4$}]		(6-2) at (2.5+0.2,2.5) {};
		\node[set, label=below left:{$w_2^2$}]		(6-3) at (2.5,2.5-0.35) {};
		
		\node[set, label=below left:{$w_4^2$}]		(7-1) at (5-0.2,0) {};
		\node[set, label=below right:{$w_4^6$}]		(7-2) at (5+0.2,0) {};
		\node[set, label=above left:{$w_4^4$}]		(7-3) at (5,0+0.35) {};
		
		\node[set, label=below left:{$u_4$}]		(8) at (5,2.5) {};
		
		\node[set, label=above left:{$u_6$}]		(9) at (7.5,0) {};
		
		\node[set, label=below left:{$w_6^4$}]		(10-1) at (7.5-0.2,2.5) {};
		\node[set, label=below right:{$w_6^5$}]		(10-2) at (7.5+0.2,2.5) {};
		\node[set, label=below left:{$w_6^6$}]		(10-3) at (7.5,2.5-0.35){};
		
		\node[set, label=above left:{$w_5^6$}]		(11-1) at (10-0.2,0) {};
		\node[set, label=above right:{$w_5^5$}]		(11-2) at (10+0.2,0) {};
		\node[set, label=below right:{$w_5^3$}]		(11-3) at (10,-0.35) {};
		
		\node[set, label=above left:{$u_5$}]		(12) at (12.5,0) {};

		\node[set, label=below right:{$b_2^1$}]		(13) at (0,2.5) {};
		\node[set, label=above right:{$b_3^1$}]		(14) at (0,-5) {};
		\node[set, label=above left:{$b_5^3$}]		(15) at (10,-2.5) {};
		\node[set, label=above left:{$b_3^5$}]		(16) at (12.5,-5) {};
		\node[set, label=below left:{$b_6^5$}]		(17) at (12.5,2.5) {};
		
		\path (1) edge (2-1);
		\path (2-2) edge (3);
		\path (4-3) edge (3);
		\path (4-1) edge (0,-5);
		\path (1) edge (0,-5);
		\path (5) edge (2-3);
		\path (5) edge (6-3);
		\path (1) edge (0,2.5);
		\path (6-1) edge (0,2.5);
		\path (5) edge (7-1);
		\path (7-3) edge (8);
		\path (6-2) edge (8);
		\path (7-2) edge (9);
		\path (10-3) edge (9);
		\path (10-1) edge (8);
		\path (11-1) edge (9);
		\path (11-3) edge (10,-2.5);
		\path (3) edge (10,-2.5);
		\path (11-2) edge (12);
		\path (12) edge (12.5,-5);
		\path (4-2) edge (12.5,-5);
		\path (12) edge (12.5,2.5);
		\path (10-2) edge (12.5,2.5);
	\end{tikzpicture}
}
	\caption{Example embedding $E^{G'(I)}$, where $I = (X, C)$, $X = \{1, 2, 3, 4, 5, 6\}$, and $C = \{\{1, 2, 3\}, \{1, 2, 4\}, \{1, 3, 5\}, \{2, 4, 6\}, \{3, 5, 6\}, \{4, 5, 6\}\}$. Note that the distances between $w_j^i, w_j^p, w_j^q$ are not to scale.}
	\label{gridgraphfrompcx3c2}
\end{figure}
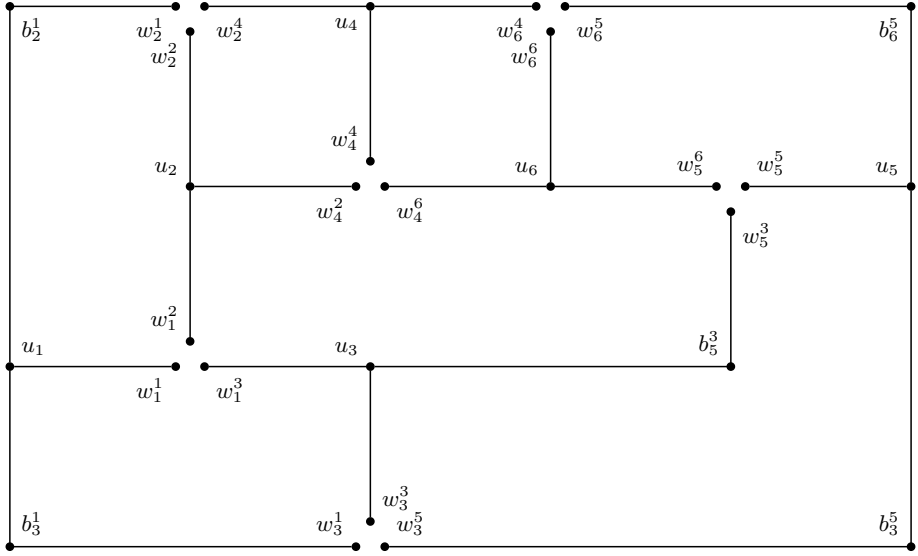

Note that for each edge in $E'$ its segment in embedding $E^{G'(I)}$ is either horizontal or vertical. We replace the segment in $E^{G'(I)}$ of each edge in $G'(I)$ with a chain of copies of three agents. This chain ensures that either all three agents $w_j^i$ for $i \in C_j$ are matched in the same triple (indicating that the corresponding set is in the solution) or none of them is matched in the same triple (indicating that the corresponding set is not in the solution).

For an edge $e \in E'$, let $s_e$ denote its corresponding segment in $E^{G'(I)}$. Let $S_{G'(I)} = \{s_e | e \in E'\}$ denote the set of all segments of the edges in $E^{G'(I)}$ and for a segment $s \in S_{G'(I)}$ let $l(s)$ denote the length of segment $s$. 

For each segment $s \in S_{G'(I)}$ we create $\hat{n} =  \left\lceil\frac{l(s)}{d}\right\rceil$ copies of the triple $A_s[z] = \{\alpha_s[z], \beta_s[z], \gamma_s[z]\}$, where $z\in \left[\hat{n}\right]$, for $V_b$. We embed the agents $A_s[z]$, where $z \in [\hat{n}]$, around segment $s$ to connect the two vertices at the endpoints of $s$ in embedding $E^{G'(I)}$. For convenience, when using $\hat{n}$, we refer to the constant corresponding to segment $s$ which should be clear from the context. Let $f,t \in V'$ denote the vertices at the endpoints of segment $s$. We merge vertex $t$ and $\gamma_s[\hat{n}]$ into one vertex, while referring to it by either name. For convenience, we shall use $\gamma_s[0]$ to refer to $f$. The agents will be embedded such that the following hold for each $z \in [\hat{n}]$: 
\begin{itemize}
	\item The distance between agents $\alpha_s[z]$ and $\beta_s[z]$ is $\epsilon$.
	\item The distance between agents $\alpha_s[z]$ (resp. $\beta_s[z]$) and $\gamma_s[z]$ is 1.
	\item The distance between agents $\alpha_s[z]$ (resp. $\beta_s[z]$) and $\gamma_s[z-1]$ is 1.
\end{itemize}
This embedding ensures that for a strictly popular outcome that either all $A_s[z]$, where $z\in [\hat{n}-1]$, or all ${\gamma_s[z-1], \alpha_s[z], \beta_s[z]}$ must be matched together.

As the length of the chain $A_s[z]$, where $z \in [\hat{n}]$, is slightly longer than $l(s)$, the chain $A_s[z]$ is `bent' into the $xy$-plane in the negative $z$-direction in embedding $E^b$. The agents in $A_s[0]$ and $A_s[\hat{n}]$ have the closest $z$-coordinate to 0. The agents in $A_s[\lfloor \hat{n}/2 \rfloor]$ and $A_s[\lceil \hat{n}/2 \rceil]$ have the furthest $z$-coordinate from 0. For each $z \in [\hat{n}]$, the vertices $\alpha_s[z]$ and $\beta_s[z]$ have the same $z$-coordinate. See \cref{gridgraphfrompcx3c3,,gridgraphfrompcx3c4,,gridgraphfrompcx3cchains} for an illustration.

\begin{figure}
	\resizebox{\textwidth}{!}{%
	\begin{tikzpicture}[-,auto, semithick]
		\tikzset{
			element/.style={
				fill=black,draw=black,text=black,shape=circle,inner sep=1pt,minimum size=1pt
			},
			set/.style={
				fill=white,draw=white,text=black,shape=circle
			},
		}
		
		\node[element, label= left:{$f$}] 	(1) at (0,0) {};
		\node[element, label= right:{$t$}] 	(2) at (10,0) {};

		\node[element, label={[shift={(-0.15,-0.1)}]$\alpha_s[1]$}] (3-1) at (0.5,0.075){};
		\node[element, label={[shift={(-0.15,-0.65)}]$\beta_s[1]$}] (3-1) at (0.5,-0.075) {};
		\node[element, label={[shift={(0.1,-0.1)}]$\gamma_s[1]$}] 	(3-1) at (1,0) {};
		
		\node[element] 	(3-1) at (1.5,0.075) {};
		\node[element] 	(3-1) at (1.5,-0.075) {};
		\node[element] 	(3-1) at (2,0) {};
		
		\node[element] 	(3-1) at (2.5,0.075) {};
		\node[element] 	(3-1) at (2.5,-0.075) {};
		\node[element] 	(3-1) at (3,0) {};
		
		\node[element] 	(3-1) at (3.5,0.075) {};
		\node[element] 	(3-1) at (3.5,-0.075) {};
		\node[element] 	(3-1) at (4,0) {};
		
		\node[element] 	(3-1) at (4.5,0.075) {};
		\node[element] 	(3-1) at (4.5,-0.075) {};
		\node[element] 	(3-1) at (5,0) {};
		
		\node[element] 	(3-1) at (5.5,0.075) {};
		\node[element] 	(3-1) at (5.5,-0.075) {};
		\node[element] 	(3-1) at (6,0) {};
		
		\node[element] 	(3-1) at (6.5,0.075) {};
		\node[element] 	(3-1) at (6.5,-0.075) {};
		\node[element] 	(3-1) at (7,0) {};
		
		\node[element] 	(3-1) at (7.5,0.075) {};
		\node[element] 	(3-1) at (7.5,-0.075) {};
		\node[element] 	(3-1) at (8,0) {};
		
		\node[element] 	(3-1) at (8.5,0.075) {};
		\node[element] 	(3-1) at (8.5,-0.075) {};
		\node[element] 	(3-1) at (9,0) {};
		
		\node[element, label={[shift={(-0.15,-0.1)}]$\alpha_s[\hat{n}]$}] 	(3-1) at (9.5,0.075) {};
		\node[element, label={[shift={(-0.15,-0.65)}]$\beta_s[\hat{n}]$}] 	(3-1) at (9.5,-0.075) {};
		
		\draw[->,ultra thick] (-1,-1)--(10.2,-1) node[right]{$x$};
		\draw[->,ultra thick] (-1,-1)--(-1,0.5) node[above]{$y$};
	\end{tikzpicture}
}
	\caption{Projection of example chain $A_s[z]$, where $z \in [\hat{n}]$, of segment $s$, where $t, f$ are the endpoints of $s$, on the $xy$-plane.}
	\label{gridgraphfrompcx3c3}
\end{figure}
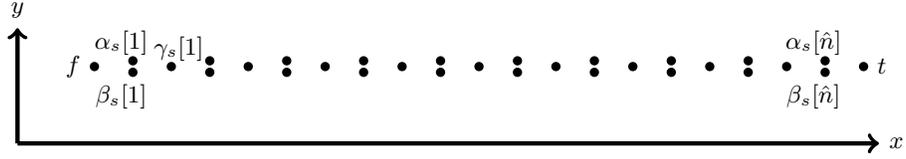
\begin{figure}
	\resizebox{\textwidth}{!}{%
	\begin{tikzpicture}[-,auto, semithick]
		\tikzset{
			element/.style={
				fill=black,draw=black,text=black,shape=circle,inner sep=1pt,minimum size=1pt
			},
			set/.style={
				fill=white,draw=white,text=black,shape=circle
			},
		}
		
		\node[element, label= left:{$f$}] 	(1) at (0,0) {};
		\node[element, label= right:{$t$}] 	(2) at (10,0) {};

		\node[element, label={[shift={(-0.075,-0.1)}]$\alpha_s[1]$}] (3-1) at (0.5,-0.1){};
		\node[element, label={[shift={(-0.075,-0.65)}]$\beta_s[1]$}] (3-1) at (0.5,-0.1) {};
		\node[element, label={[shift={(0.1,-0.1)}]$\gamma_s[1]$}] 	(3-1) at (1,-0.2) {};
		
		\node[element] 	(3-1) at (1.5,-0.3) {};
		\node[element] 	(3-1) at (1.5,-0.3) {};
		\node[element] 	(3-1) at (2,-0.4) {};
		
		\node[element] 	(3-1) at (2.5,-0.5) {};
		\node[element] 	(3-1) at (2.5,-0.5) {};
		\node[element] 	(3-1) at (3,-0.6) {};
		
		\node[element] 	(3-1) at (3.5,-0.7) {};
		\node[element] 	(3-1) at (3.5,-0.7) {};
		\node[element] 	(3-1) at (4,-0.8) {};
		
		\node[element] 	(3-1) at (4.5,-0.9) {};
		\node[element] 	(3-1) at (4.5,-0.9) {};
		\node[element] 	(3-1) at (5,-1) {};
		
		\node[element] 	(3-1) at (5.5,-0.9) {};
		\node[element] 	(3-1) at (5.5,-0.9) {};
		\node[element] 	(3-1) at (6,-0.8) {};
		
		\node[element] 	(3-1) at (6.5,-0.7) {};
		\node[element] 	(3-1) at (6.5,-0.7) {};
		\node[element] 	(3-1) at (7,-0.6) {};
		
		\node[element] 	(3-1) at (7.5,-0.5) {};
		\node[element] 	(3-1) at (7.5,-0.5) {};
		\node[element] 	(3-1) at (8,-0.4) {};
		
		\node[element] 	(3-1) at (8.5,-0.3) {};
		\node[element] 	(3-1) at (8.5,-0.3) {};
		\node[element] 	(3-1) at (9,-0.2) {};
		
		\node[element, label={[shift={(0.05,-0.1)}]$\alpha_s[\hat{n}]$}] 	(3-1) at (9.5,-0.1) {};
		\node[element, label={[shift={(0.05,-0.65)}]$\beta_s[\hat{n}]$}] 	(3-1) at (9.5,-0.1) {};

		\draw[->,ultra thick] (-1.2,-1.1)--(10.2,-1.2) node[right]{$x$};
		\draw[->,ultra thick] (-1.2,-1.1)--(-1.2,0.5) node[above]{$z$};
	\end{tikzpicture}
}
	\caption{Projection of example chain $A_s[z]$, where $z \in [\hat{n}]$, of segment $s$, where $t, f$ are the endpoints of $s$, on the $xz$-plane.}
	\label{gridgraphfrompcx3c4}
\end{figure}
\begin{figure}
\input{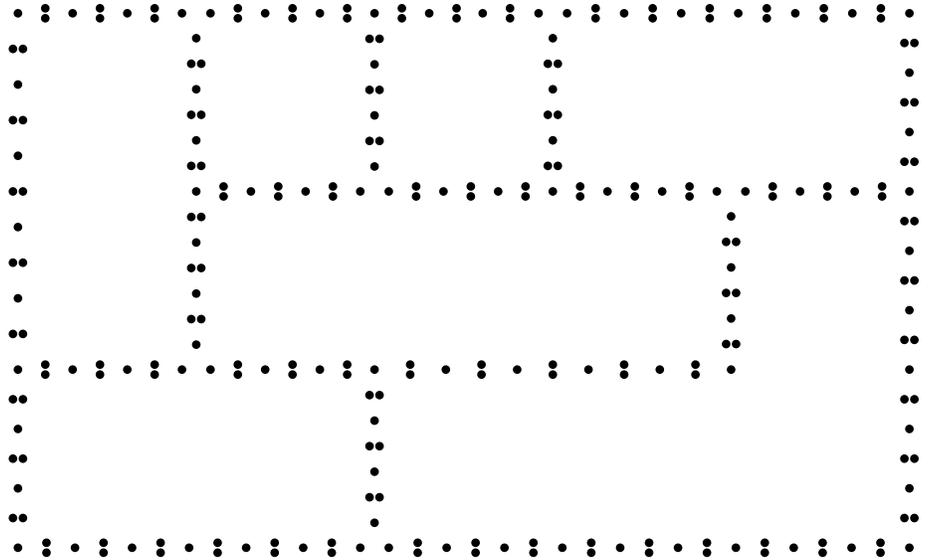}
\caption{Segments of embedding $E^{G'(I)}$ of \cref{gridgraphfrompcx3c2} replaced with chains. The number of agents in the chains are reduced for clarity.} 
\label{gridgraphfrompcx3cchains}
\end{figure}
In conclusion, for the bottom layer we have that $V_b = U \cup B \cup W' \cup \\ \bigcup\limits_{s \in S_{G'(I)}, z \in [\hat{n}]}A_s[z]$. These agents in $\bigcup\limits_{s \in S_{G'(I)}, z \in [\hat{n}]}A_s[z]$ are embedded in $E^b$ as defined above and for each agent $a \in U \cup B \cup W'$, we have that $E^b(a) = (E^{G'(I)}_x, E^{G'(I)}_y, 0)$.
\subsubsection{Top layer.}
In top layer, the embedding of the agents is shaped somewhat similarly to a snowflake. It consists of a triple of agents $d_{1,0}^1, d_{2,0}^1, d_{3,0}^1$, which we call center agents, placed in the center in embedding $E^t$ that form an equilateral triangle with edge length 1. Out each of these center agents, a binary tree-like structure grows. In this structure, the parents and their two children also form a equilateral triangle. The edge length of the equilateral triangle at depth 0 is large, but decreases exponentially as the depth increases. The total depth depends on the size of $X$. Finally, the edges are replaced with chains similar to the chains in the bottom layer. The constructed top layer will be placed on the $xy$-plane with $z$-coordinate $10|X|+10$ directly above the bottom layer to create distance from the bottom layer. In the construction of the top layer, the embedding of the edges will not have any bending points.

We shall construct an initial snowflake graph $G_{snow} = (V_{snow}, E_{snow})$ with embedding $E^{G_{snow}} : V_{snow} \rightarrow \mathbb{R}^2$. Let $k \in \mathbb{R}$ be such that $|X| = 3\cdot 2^k$. The binary tree structure attached to an center vertices $d_{1,0}^1, d_{2,0}^1, d_{3,0}^1 \in V_{snow}$ will have depth $\lfloor k \rfloor$. We shall label the vertices in the binary tree structure attached to center vertex $d_{i,0}^1$ at depth $j$ by $d_{i,j}^l \in V_{snow}$, where $l \in [j]$. The edge length of the equilateral triangle that a vertex at depth $j$ forms with its children is $10(|X|+|C|) + 4^{\lfloor k \rfloor-j+2}$. See \cref{snowflake1} for an example initial snowflake graph and embedding.

\begin{figure}
	\includegraphics[width=0.9\textwidth]{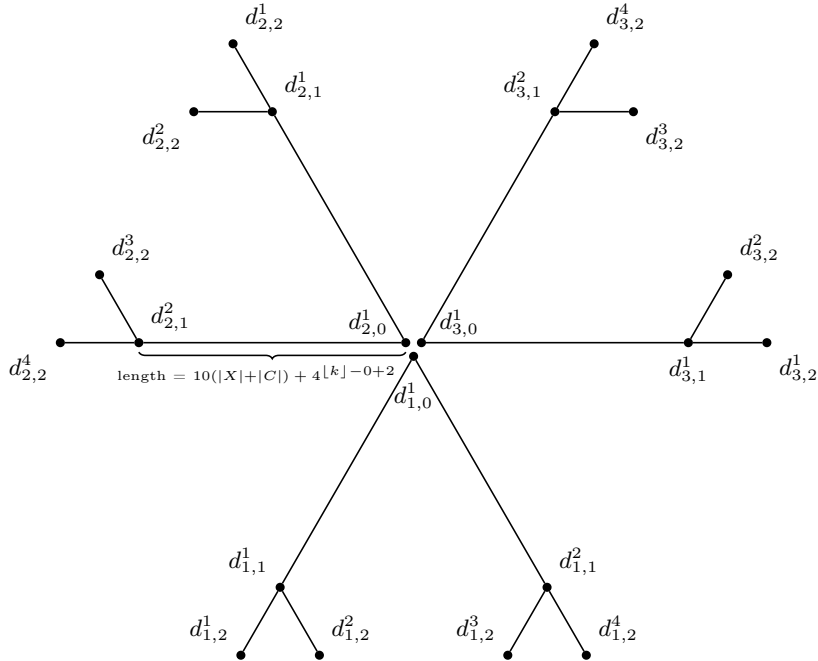}
	\caption{Initial snowflake graph $G_{snow}$ and its embedding $E^{G_{snow}}$ where $|X|=12$. Note that $\lfloor k\rfloor =2$.}
	\label{snowflake1}
\end{figure}

We create a balanced binary tree gadget that has depth $2$. a vertex at depth $j$ forms an equilateral triangle with edge length $10(|X|+|C|)+4^{2-j}$. The vertices in the balanced binary tree gadget will follow the same naming convention as the aforementioned vertices in the binary tree structure in the initial snowflake graph. See \cref{balancedbinarytreegadget} for a figure of the balanced binary tree gadget.

\begin{figure}
	\centering
	\includegraphics[width=0.5\textwidth]{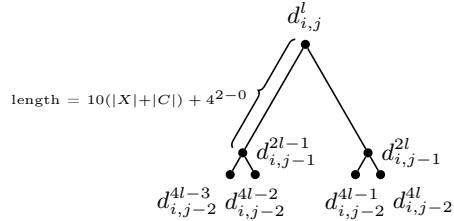}
	\caption{balanced binary tree gadget.}
	\label{balancedbinarytreegadget}
\end{figure}


If $k \notin \mathbb{N}$, we have that the number of leaves in $G_{snow}$ is smaller than $|X|$. We replace $\frac{|X|- 3\cdot 2^{\lfloor k \rfloor}}{3}$ leaves with the balanced binary tree gadget to create the unbalanced snowflake graph $G'_{snow} = (V'_{snow}, E'_{snow})$. Let $d_{i,j}^l$ be a leaf that is replaced by the balanced binary tree gadget. We label the root vertex of that balanced binary tree gadget with $d_{i,j}^l$.

For the embedding $E^{G'_{snow}}: V'_{snow} \rightarrow \mathbb{R}^2$, the balanced binary tree gadgets are embedded in the same manner as the binary tree structures in $E^{G_{snow}}$. This replacement ensures that that the number of leaves in $G'_{snow}$ is equivalent to $|X|$. Note that it is always possible to replace (a part of) the leaves of $G_{snow}$ with the balanced binary tree gadget to achieve the desired number of leaves in $G'_{snow}$. See \cref{snowflake2} for an example graph and embedding. 

\begin{figure}
	\includegraphics[width=\textwidth]{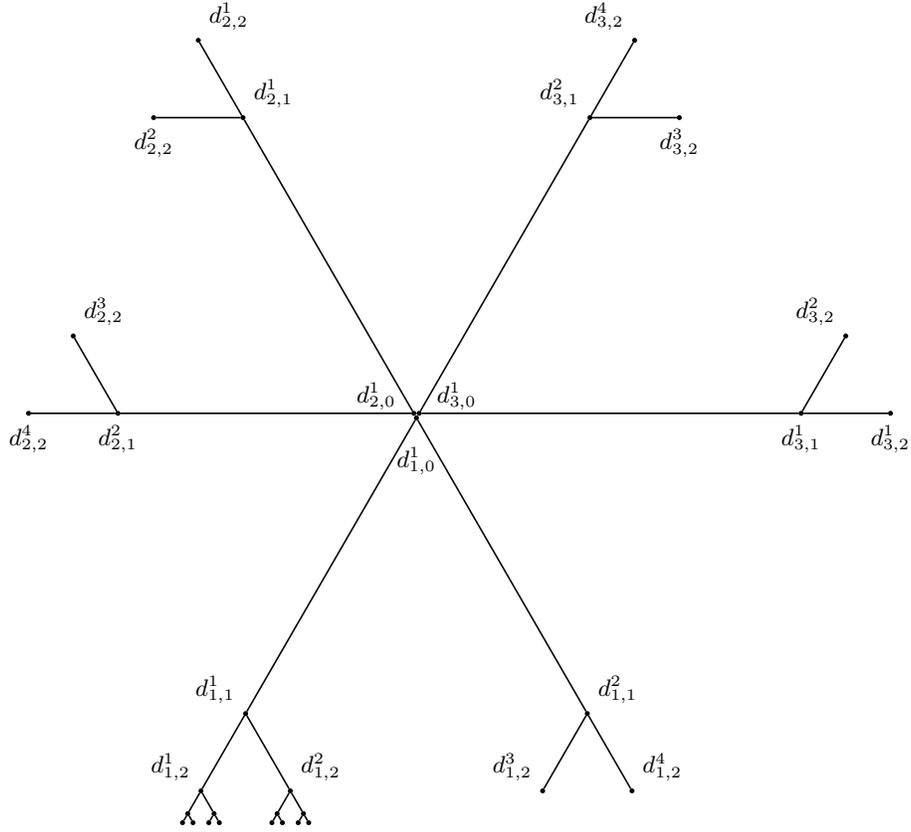}
	\caption{unbalanced snowflake graph $G'_{snow}$ and its embedding $E^{G'_{snow}}$ where $|X|=21$. Note that $\lfloor k\rfloor=2$.}
	\label{snowflake2}
\end{figure}

We modify the snowflake graph by replacing each internal vertex with 3 vertices that form a equilateral triangle with edge length 1. One these vertices will be placed on the exact same location as the original internal vertex. The other 2 vertices will be placed on the edges between the internal vertex and its children. The vertices placed on the edges are connected to the child at the end of its corresponding edge. See \cref{internalvertexgadget,,finaltoplayer} for an illustration of the gadget.

\begin{figure}
	\centering
	\includegraphics[width=0.5\textwidth]{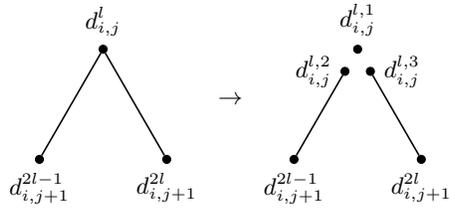}
	\caption{internal vertex gadget.}
	\label{internalvertexgadget}
\end{figure}

\begin{figure}
	\centering
	\includegraphics[width=0.9\textwidth]{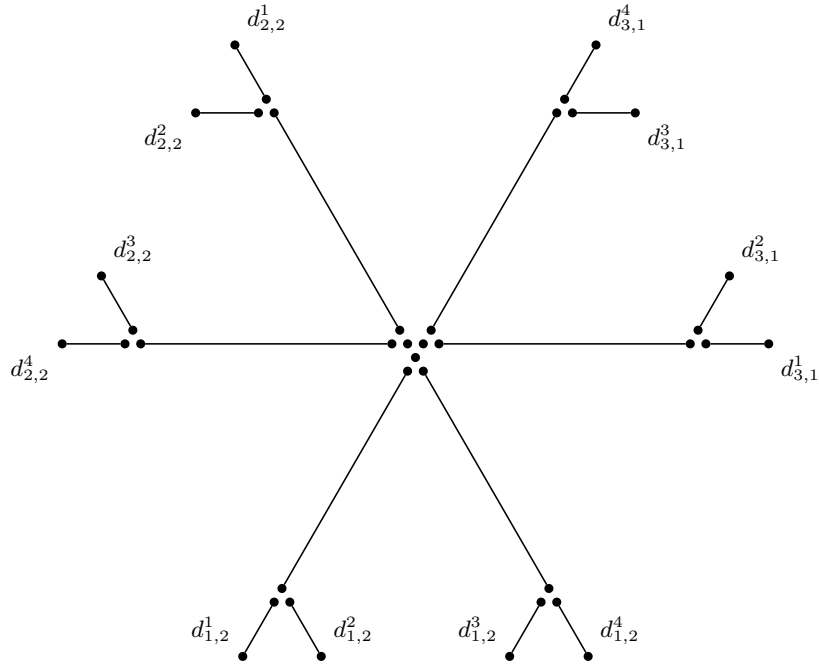}
	\caption{Graph $G_{triple}$, which is internal vertex gadget applied to $G_{snow}$ of \cref{balancedbinarytreegadget}.}
	\label{finaltoplayer}
\end{figure}

Let $G_{triple} = (V_{triple}, E_{triple})$ denote the the graph after applying the internal vertex gadget to graph $G'_{snow}$ and embedding $E^{G_{triple}}: V_{triple} \rightarrow \mathbb{R}^2$ be as described above. 

Let $d_{i,j}^l$ be a leaf of one of the trees in $G'_{snow}$. For convenience, we shall also use the label $d_{i,j}^{l,1}$ refer to the same leaf.

Let $l(e)$, where $e \in E_{triple}$, denote the length of $e$ in embedding $E^{G_{triple}}$. The final step for the top layer is to replace each edge $e$ in $G_{triple}$ with $\hat{n} = \left\lceil\frac{l(e)}{d}\right\rceil$ copies of the triple $A_e[z] = \{\alpha_e[z], \beta_e[z], \gamma_e[z]\}$, where $z \in [\hat{n}]$. These triples are embedded such that they have the same distance properties as they have in the bottom layer. These properties are:
\begin{itemize}
	\item The distance between agents $\alpha_s[z]$ and $\beta_s[z]$ is $\epsilon$.
	\item The distance between agents $\alpha_s[z]$ (resp. $\beta_s[z]$) and $\gamma_s[z]$ is 1.
	\item The distance between agents $\alpha_s[z]$ (resp. $\beta_s[z]$) and $\gamma_s[z-1]$ is 1.
\end{itemize}

The embedding of the chain $A_s[z]$ is almost the exact same as in the bottom layer. Instead of `bending' the chain $A_s[z]$ in the negative direction, in the top layer the chain is `bent' in the positive $z$-direction in embedding $E^t$. See \cref{gridgraphfrompcx3c5,,gridgraphfrompcx3c6,,gridgraphfrompcx3c7} for an example.

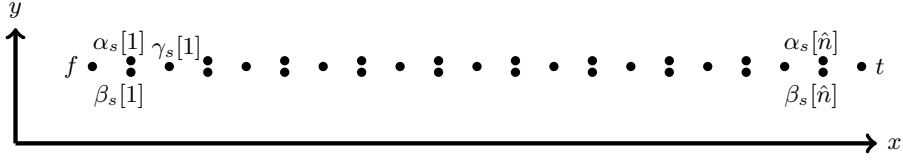
\begin{figure}
	\resizebox{\textwidth}{!}{%
	\begin{tikzpicture}[-,auto, semithick]
		\tikzset{
			element/.style={
				fill=black,draw=black,text=black,shape=circle,inner sep=1pt,minimum size=1pt
			},
			set/.style={
				fill=white,draw=white,text=black,shape=circle
			},
		}
		
		\node[element, label= left:{$f$}] 	(1) at (0,0) {};
		\node[element, label= right:{$t$}] 	(2) at (10,0) {};

		\node[element, label={[shift={(-0.15,-0.1)}]$\alpha_s[1]$}] (3-1) at (0.5,0.075){};
		\node[element, label={[shift={(-0.15,-0.65)}]$\beta_s[1]$}] (3-1) at (0.5,-0.075) {};
		\node[element, label={[shift={(0.1,-0.1)}]$\gamma_s[1]$}] 	(3-1) at (1,0) {};
		
		\node[element] 	(3-1) at (1.5,0.075) {};
		\node[element] 	(3-1) at (1.5,-0.075) {};
		\node[element] 	(3-1) at (2,0) {};
		
		\node[element] 	(3-1) at (2.5,0.075) {};
		\node[element] 	(3-1) at (2.5,-0.075) {};
		\node[element] 	(3-1) at (3,0) {};
		
		\node[element] 	(3-1) at (3.5,0.075) {};
		\node[element] 	(3-1) at (3.5,-0.075) {};
		\node[element] 	(3-1) at (4,0) {};
		
		\node[element] 	(3-1) at (4.5,0.075) {};
		\node[element] 	(3-1) at (4.5,-0.075) {};
		\node[element] 	(3-1) at (5,0) {};
		
		\node[element] 	(3-1) at (5.5,0.075) {};
		\node[element] 	(3-1) at (5.5,-0.075) {};
		\node[element] 	(3-1) at (6,0) {};
		
		\node[element] 	(3-1) at (6.5,0.075) {};
		\node[element] 	(3-1) at (6.5,-0.075) {};
		\node[element] 	(3-1) at (7,0) {};
		
		\node[element] 	(3-1) at (7.5,0.075) {};
		\node[element] 	(3-1) at (7.5,-0.075) {};
		\node[element] 	(3-1) at (8,0) {};
		
		\node[element] 	(3-1) at (8.5,0.075) {};
		\node[element] 	(3-1) at (8.5,-0.075) {};
		\node[element] 	(3-1) at (9,0) {};
		
		\node[element, label={[shift={(-0.15,-0.1)}]$\alpha_s[\hat{n}]$}] 	(3-1) at (9.5,0.075) {};
		\node[element, label={[shift={(-0.15,-0.65)}]$\beta_s[\hat{n}]$}] 	(3-1) at (9.5,-0.075) {};
		
		\draw[->,ultra thick] (-1,-1)--(10.2,-1) node[right]{$x$};
		\draw[->,ultra thick] (-1,-1)--(-1,0.5) node[above]{$y$};
	\end{tikzpicture}
}
	\caption{Projection of example chain $A_s[z]$, where $z \in [\hat{n}]$, of edge $e$, where $t, f$ are the endpoints of $e$, on the $xy$-plane.}
	\label{gridgraphfrompcx3c5}
\end{figure}
\begin{figure}
	\resizebox{\textwidth}{!}{%
	\begin{tikzpicture}[-,auto, semithick]
		\tikzset{
			element/.style={
				fill=black,draw=black,text=black,shape=circle,inner sep=1pt,minimum size=1pt
			},
			set/.style={
				fill=white,draw=white,text=black,shape=circle
			},
		}
		
		\node[element, label= left:{$f$}] 	(1) at (0,0) {};
		\node[element, label= right:{$t$}] 	(2) at (10,0) {};

		\node[element, label={[shift={(-0.075,-0.1)}]$\alpha_s[1]$}] (3-1) at (0.5,0.1){};
		\node[element, label={[shift={(-0.075,-0.65)}]$\beta_s[1]$}] (3-1) at (0.5,0.1) {};
		\node[element, label={[shift={(0.1,-0.1)}]$\gamma_s[1]$}] 	(3-1) at (1,0.2) {};
		
		\node[element] 	(3-1) at (1.5,0.3) {};
		\node[element] 	(3-1) at (1.5,0.3) {};
		\node[element] 	(3-1) at (2,0.4) {};
		
		\node[element] 	(3-1) at (2.5,0.5) {};
		\node[element] 	(3-1) at (2.5,0.5) {};
		\node[element] 	(3-1) at (3,0.6) {};
		
		\node[element] 	(3-1) at (3.5,0.7) {};
		\node[element] 	(3-1) at (3.5,0.7) {};
		\node[element] 	(3-1) at (4,0.8) {};
		
		\node[element] 	(3-1) at (4.5,0.9) {};
		\node[element] 	(3-1) at (4.5,0.9) {};
		\node[element] 	(3-1) at (5,1) {};
		
		\node[element] 	(3-1) at (5.5,0.9) {};
		\node[element] 	(3-1) at (5.5,0.9) {};
		\node[element] 	(3-1) at (6,0.8) {};
		
		\node[element] 	(3-1) at (6.5,0.7) {};
		\node[element] 	(3-1) at (6.5,0.7) {};
		\node[element] 	(3-1) at (7,0.6) {};
		
		\node[element] 	(3-1) at (7.5,0.5) {};
		\node[element] 	(3-1) at (7.5,0.5) {};
		\node[element] 	(3-1) at (8,0.4) {};
		
		\node[element] 	(3-1) at (8.5,0.3) {};
		\node[element] 	(3-1) at (8.5,0.3) {};
		\node[element] 	(3-1) at (9,0.2) {};
		
		\node[element, label={[shift={(0.05,-0.1)}]$\alpha_s[\hat{n}]$}] 	(3-1) at (9.5,0.1) {};
		\node[element, label={[shift={(0.05,-0.65)}]$\beta_s[\hat{n}]$}] 	(3-1) at (9.5,0.1) {};

		\draw[->,ultra thick] (-1.2,-0.5)--(10.2,-0.5) node[right]{$x$};
		\draw[->,ultra thick] (-1.2,-0.5)--(-1.2,1) node[above]{$z$};
	\end{tikzpicture}
}
	\caption{Projection of example chain $A_s[z]$, where $z \in [\hat{n}]$, of edge $e$, where $t, f$ are the endpoints of $e$, on the $xz$-plane.}
	\label{gridgraphfrompcx3c6}
\end{figure}
\begin{figure}
\input{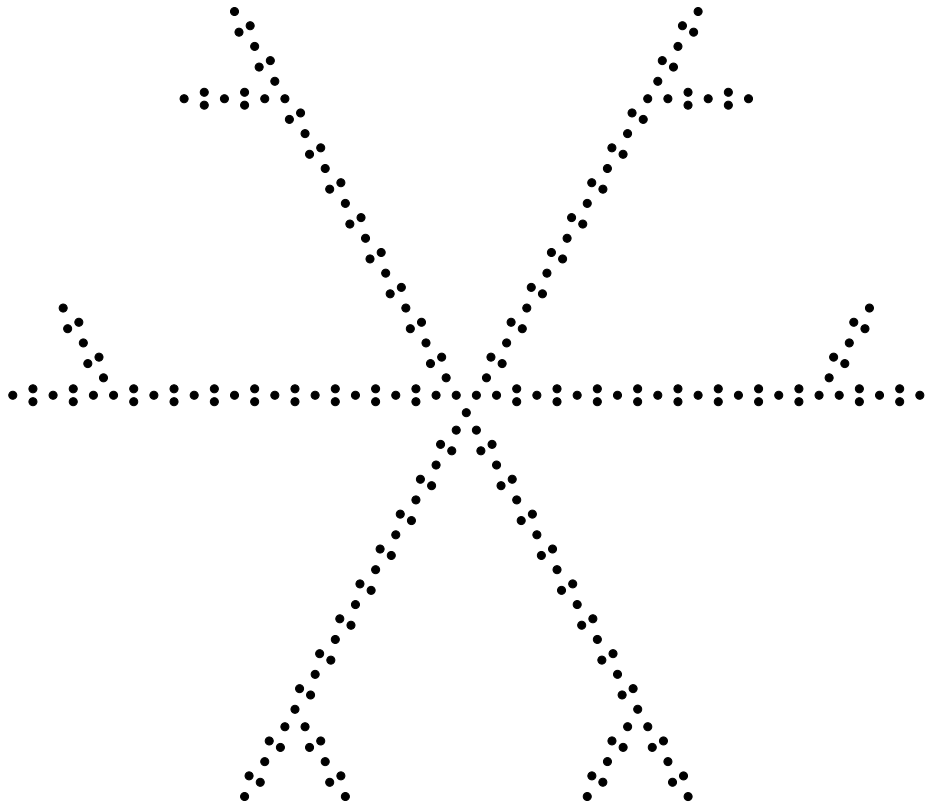}
\caption{The edges of \cref{finaltoplayer} replaced with the chains. The number of agents and the distances between the agents are exaggerated for clarity.}
\label{gridgraphfrompcx3c7}
\end{figure}

In conclusion, for the top layer we have that $V_t = V_{triple} \cup \\ \bigcup\limits_{e \in E_{triple}, z \in [\hat{n}]}A_s[z]$. As mentioned in the beginning of this section, the top layer is placed on the $xy$-plane with $z$-coordinate $10|X|+10$. Thus, for each agent $a \in V_{triple}$, we have that $E^t(a) = (E^{G_{triple}}_x, E^{G_{triple}}_y, 10|X|+10)$. The agents in $\bigcup\limits_{e \in E_{triple}, z \in [\hat{n}]}A_s[z]$ are embedded in $E^t$ as defined above.
\subsubsection{Ascending layer.}
The purpose of the ascending layer is to connect the bottom layer with the top layer. In the construction of the ascending layer, the embedding of the edges will not have any bending points.

Let us arbitrarily label the leaves in the top layer as $l_1, \dots , l_m$. We shall connect vertex agent $u_i$ from the bottom layer to the leaf $l_i$ of the top layer, where $i \in [m]$.

We shall create a graph $G_{asc} = (V_{asc}, E_{asc})$ and embedding $E^{G_{asc}} : V_{asc} \rightarrow \mathbb{R}^3$. For each $u_i \in U$, let us create 2 auxiliary vertices $u'_i, l'_i$. The vertex set $V_{asc}$ consists of all element vertices from the bottom layer, the leaves from the top layer and the aforementioned auxiliary vertices. That is, $V_{asc} = U \cup \{l_i, u'_i, l'_i | i \in [m]\} $. In embedding $E^{G_{asc}}$, the element vertices and leaves are embedded in the same way as in embedding $E^b$ and $E^t$, respectively. That is, for $i \in [m]$, $E^{G_{asc}}(u_i) = E^b(u_i)$ and $E^{G_{asc}}(l_i) = E^t(l_i)$.

The auxiliary vertex $u'_i$ is placed directly above vertex $u_i$ at height $10i$ in embedding $E^{G_{asc}}$. That is, $E^{G_{asc}}(u'_i) = (E^b_x(u_i), E^b_y(u_i), 10i)$. The auxiliary vertex $l'_i$ is placed directly below vertex $l_i$ also at height $10i$ in embedding $E^{G_{asc}}$. That is, $E^{G_{asc}}(l'_i) = (E^t_x(l_i), E^t_y(l_i), 10i)$.

The edges in $E_{asc}$ consists of $\{u_i, u'_i\}, \{u'_i, l'_i\}$, and $\{l'_i,l_i\}$, for each $i \in [m]$. See \cref{ascending1} for an illustration.

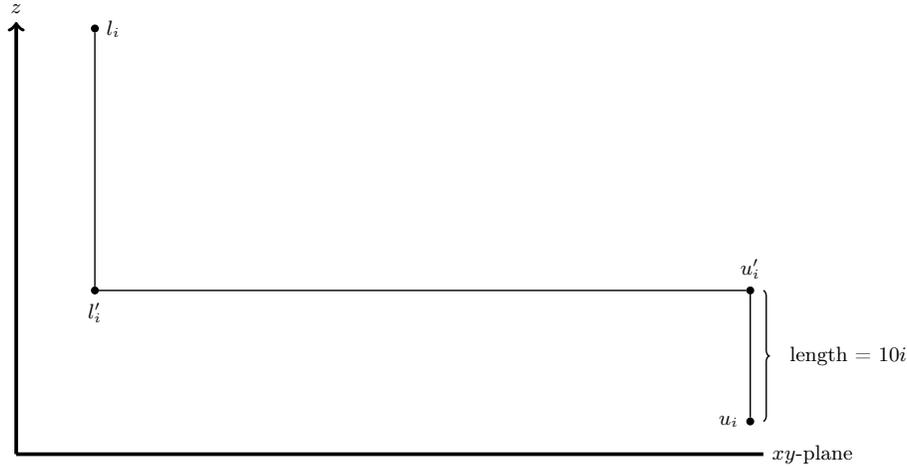
\begin{figure}
	\resizebox{\textwidth}{!}{%
	\begin{tikzpicture}[-,auto, semithick]
		\tikzset{
			element/.style={
				fill=black,draw=black,text=black,shape=circle,inner sep=1pt,minimum size=1pt
			},
			set/.style={
				fill=white,draw=white,text=black,shape=circle
			},
		}
		
		\node[element, label= right:{$l_i$}] 	(1) at (0,6) {};
		\node[element, label= left:{$u_i$}] 	(2) at (10,0) {};
		
		\node[element, label= below:{$l'_i$}] 	(3) at (0,2) {};
		\node[element, label= above:{$u'_i$}] 	(4) at (10,2) {};

		\path (1) edge (3);
		\path (4) edge (3);
		\path (4) edge (2);
		
		\draw[-,ultra thick] (-1.2,-0.5)--(10.2,-0.5) node[right]{$xy$-plane};
		\draw[->,ultra thick] (-1.2,-0.5)--(-1.2,6.1) node[above]{$z$};
		
		\draw [decorate,
		decoration = {brace,mirror}] (10.2,0) --  (10.2,2) ;
		\node[]  at (11.5,1) {length = $10i$};
	\end{tikzpicture}
}
	\caption{Initial ascending graph with element vertex $u_i$ and leaf $l_i$.}
	\label{ascending1}
\end{figure}

Let $l(e)$, where $e \in E_{asc}$, denote the length of $e$ in embedding $E^{G_{asc}}$. The final step for the ascending layer is to replace edge $e$ in $G_{asc}$ with $\hat{n} = \left\lceil \frac{l(e)}{d} \right\rceil$ copies of the triple $A_e[z] = \{\alpha_e[z], \beta_e[z], \gamma_e[z]\}$, where $z \in [\hat{n}]$. These triples are embedded such that they have the same distance properties as in the bottom and top layer.

As in the bottom layer and the top layer the chain $A_e$ is longer than edge $e$. We shall define 3 distinct bending directions for the edges $\{u_i, u'i\}, \{u'_i, l'_i\}$, and $\{l'_i,l_i\}$ in embedding $E^a$. The chain $A_{\{u_i, u'_i\}}[z]$ will be bent in the $u'_i - l'_i$ direction. The chain $A_{\{u'_i, l'_i\}}[z]$ will be bent in the negative $z$-direction. The chain $A_{\{l'_i, l_i\}}[z]$ will be bent in the $l'_i-u'_i$ direction. See \cref{ascending2} for an illustration.

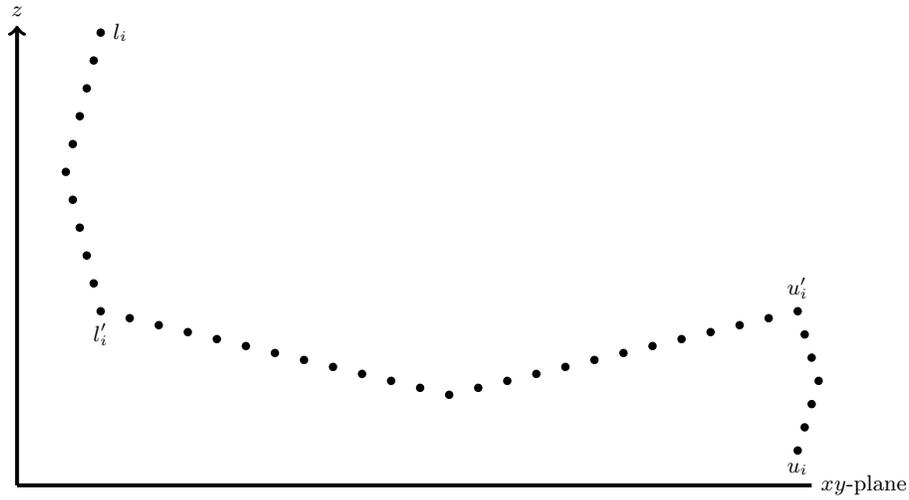
\begin{figure}
	\resizebox{\textwidth}{!}{%
	\begin{tikzpicture}[-,auto, semithick]
		\tikzset{
			element/.style={
				fill=black,draw=black,text=black,shape=circle,inner sep=1pt,minimum size=1pt
			},
			set/.style={
				fill=white,draw=white,text=black,shape=circle
			},
		}
		
		\node[element, label= right:{$l_i$}] 	(1) at (0,6) {};
		\node[element, label= below:{$u_i$}] 	(2) at (10,0) {};
		
		\node[element, label= below:{$l'_i$}] 	(3) at (0,2) {};
		\node[element, label= above:{$u'_i$}] 	(4) at (10,2) {};
		
		\node[element] at (10.1,1/3) {};
		\node[element] at (10.2,2/3) {};
		\node[element] at (10.3,3/3) {};
		\node[element] at (10.2,4/3) {};
		\node[element] at (10.1,5/3) {};
		
		\node[element] at (5/12, 2-0.1) {};
		\node[element] at (10/12,2-0.2) {};
		\node[element] at (15/12,2-0.3) {};
		\node[element] at (20/12,2-0.4) {};
		\node[element] at (25/12,2-0.5) {};
		\node[element] at (30/12,2-0.6) {};
		\node[element] at (35/12,2-0.7) {};
		\node[element] at (40/12,2-0.8) {};
		\node[element] at (45/12,2-0.9) {};
		\node[element] at (50/12,2-1) {};
		\node[element] at (55/12,2-1.1) {};
		\node[element] at (60/12,2-1.2) {};
		\node[element] at (65/12,2-1.1) {};
		\node[element] at (70/12,2-1) {};
		\node[element] at (75/12,2-0.9) {};
		\node[element] at (80/12,2-0.8) {};
		\node[element] at (85/12,2-0.7) {};
		\node[element] at (90/12,2-0.6) {};
		\node[element] at (95/12,2-0.5) {};
		\node[element] at (100/12,2-0.4) {};
		\node[element] at (105/12,2-0.3) {};
		\node[element] at (110/12,2-0.2) {};
		\node[element] at (115/12,2-0.1) {};
		
		\node[element] at (0-0.1,2.4) {};
		\node[element] at (0-0.2,2.8) {};
		\node[element] at (0-0.3,3.2) {};
		\node[element] at (0-0.4,3.6) {};
		\node[element] at (0-0.5,4) {};
		\node[element] at (0-0.4,4.4) {};
		\node[element] at (0-0.3,4.8) {};
		\node[element] at (0-0.2,5.2) {};
		\node[element] at (0-0.1,5.6) {};
		
		\draw[-,ultra thick] (-1.2,-0.5)--(10.2,-0.5) node[right]{$xy$-plane};
		\draw[->,ultra thick] (-1.2,-0.5)--(-1.2,6.1) node[above]{$z$};
	\end{tikzpicture}
}
	\caption{Replacing the edges in \cref{ascending1} with the chains $A_{e}[z]$, where $e \in \{\{u_i, u'_i\}, \{u'_i, l'_i\}, \{l'_i,l_i\}\}$.}
	\label{ascending2}
\end{figure}

In conclusion, for the ascending layer we have that $V_a = V_{asc} \cup \bigcup\limits_{e \in E_{asc}, z \in [\hat{n}]} A_e[z]$. These agents are embedded as described above.

\subsection{Instance size}
We shall demonstrate that the size of the constructed 3D-EuclidSR game $G$ is polynomial in the size of the X3C instance $I$. This is shown by determining the order of the number of agents in each layer of $G$ separately.

\subsubsection{Bottom layer.}
We compute the bound on the number of agents in the bottom layer by determining the maximum length of a segment and the total number of segments in $G'(I)$ with embedding $E^{G'(I)}$.

The length of a segment in embedding $E^{G'(I)}$ is at most $10\cdot (|X| + |C|)$, which is the maximum width and height of the graph in the embedding. Each segment $s$ is replaced with an agent triple chain $A_s[z]$ of which the amount of is linear in the length of the segment. Note that the number of segments is at most $2\cdot |E'| = 2 \cdot 3 \cdot |X|$. Additionally, there are element agents, set agents and bending point agents. The number of bending point agents is at most the number of element agents. Note that $|X| = |C|$. Thus, the order of the number of agents is 
\begin{align*}
	&\mathcal{O}(2\cdot 3\cdot|X| \cdot (10\cdot (|X| + |C|)) + |X| + |C| + |X|) \\
	& = \mathcal{O}(60|X|^2 + 60|X||C| + 2|X|  + |C|) \\
	&= \mathcal{O}(|X|^2 + |X||C|) \\
	&= \mathcal{O}(|X|^2).
\end{align*}

\subsubsection{Top layer.}
The bound on the number of agents in top layer be computed by analyzing the binary tree structures and balanced binary tree gadgets of $G'_{snow}$. 

The depth of the tree is $\lfloor k \rfloor + 2$. The edge length between depth $j$ and $j+1$ is $10(|X|+|C|)+4^{\lfloor k \rfloor-j+2}$. The number of edges between depth $j$ and $j+1$ is $2^{j+1}$. There are $2^j$ internal vertices at depth $j$. We have that $|X| \geq 3\cdot 2^{\lfloor k \rfloor} \iff \lg{\frac{|X|}{3}} \geq \lfloor k \rfloor$.

Each edge $e \in E_{triple}$ is replaced with an agent triple chain $A_e$. The amount of agents in chain $A_e$ is linear in the length of the edge $e$. Additionally, each internal vertex in the tree is replaced by a triple. Thus, the total amount of agents is in the order of
\begin{align*}
	&\mathcal{O}(3\cdot (\sum\limits_{j=0}^{\lfloor k \rfloor + 1}2^{j+1} \cdot (10(|X|+|C|)+4^{\lfloor k \rfloor -j + 2})) + 3\cdot \sum\limits_{j = 0}^{\lfloor k \rfloor + 1}3\cdot2^{j}) \\ 
	&=\mathcal{O}(3\cdot (\sum\limits_{j=0}^{\lfloor k \rfloor + 1}2^{j+1}\cdot 20 |X|+2^{2\lfloor k \rfloor -j + 5}) + 9\cdot (2^{\lfloor k \rfloor + 2}- 1)) \\
	&=\mathcal{O}(60|X|\cdot (\sum\limits_{j=0}^{\lfloor k \rfloor + 1}2^{j+1})+ 3\cdot (\sum\limits_{j=0}^{\lfloor k \rfloor + 1}2^{2\lfloor k \rfloor -j + 5}) + 9\cdot (2^{\lfloor k \rfloor + 2}- 1))\\
	&=\mathcal{O}(120|X|\cdot (\sum\limits_{j=0}^{\lfloor k \rfloor + 1}2^{j})+ 3\cdot 2^{2\lfloor k \rfloor + 5}\cdot (\sum\limits_{j=0}^{\lfloor k \rfloor + 1}\frac{1}{2^{j}}) + 9\cdot (2^{\lfloor k \rfloor + 2}- 1))\\
	&=\mathcal{O}(120|X|\cdot (2^{\lfloor k \rfloor + 2}- 1) + 3\cdot 2^{2\lfloor k \rfloor + 5}\cdot (2 - \frac{1}{2^{\lfloor k \rfloor + 1}}) + 9\cdot (2^{\lfloor k \rfloor + 2}- 1)) \\
	&=\mathcal{O}(480|X|\cdot 2^{\lfloor k \rfloor}- 120|X| + 3\cdot 2^5 \cdot (2^{\lfloor k \rfloor})^2 \cdot (2 - \frac{1}{2\cdot 2^{\lfloor k \rfloor}}) + 36\cdot 2^{\lfloor k \rfloor}- 9) \\
	&=\mathcal{O}(480|X|\cdot \frac{|X|}{3}- 120|X| + 3\cdot 2^5 \cdot (\frac{|X|}{3})^2 \cdot (2 - \frac{1}{2\cdot \frac{|X|}{3}}) + 36\cdot \frac{|X|}{3}- 9) \\
	&= \mathcal{O}(|X|^2).
\end{align*}

\subsubsection{Ascending layer.}
We bound the number of agents in the ascending layer by counting the number and lengths of edges in the initial ascending graph.

There are $2$ edges in $E_{asc}$ parallel to the $z$-axis of length $10i$, where $i \in [|X|]$. Additionally, there are $|X|$ edges in $E_{asc}$ parallel to the $xy$-plane, of length at most the width of the top layer. The width of the top layer is $\sum\limits_{j=0}^{\lfloor k \rfloor + 1}10(|X|+|C|)+4^{\lfloor k \rfloor - j + 2}$.

Each edge $e \in E_{asc}$ is replaced with an agent triple chain $A_e$. The amount of agents in chain $A_e$ is linear in the length of edge $e$. Thus, the total amount of agents is in the order of
\begin{align*}
	& \mathcal{O}((\sum\limits_{i = 1}^{|X|}2\cdot10i) + |X|\cdot(\sum\limits_{j = 0}^{\lfloor k \rfloor + 1}10(|X|+|C|)+4^{\lfloor k \rfloor - j + 2})) \\
	& = \mathcal{O}(20\cdot(\sum\limits_{i = 1}^{|X|}i) + |X|\cdot((\sum\limits_{j = 0}^{\lfloor k \rfloor + 1}20|X|)+(\sum\limits_{j = 0}^{\lfloor k \rfloor + 1}4^{\lfloor k \rfloor - j + 2}))) \\
	& = \mathcal{O}(\frac{20|X|(|X|+1)}{2} + (\lfloor k \rfloor + 2) \cdot 20|X|^2+|X|\cdot 4^{\lfloor k \rfloor + 2}(\sum\limits_{j = 0}^{\lfloor k \rfloor + 1}\frac{1}{4^{j}})) \\
	& = \mathcal{O}(\frac{20|X|^2+20|X|}{2} + (\lfloor k \rfloor + 2) \cdot 20|X|^2+|X|\cdot (4\cdot2^{\lfloor k \rfloor})^2\cdot(\frac{1-\frac{1}{4}^{\lfloor k \rfloor +2}}{1- \frac{1}{4}})) \\
	& = \mathcal{O}(\frac{20|X|^2+20|X|}{2} + (\lfloor k \rfloor + 2) \cdot 20|X|^2+|X|\cdot (4\cdot\frac{|X|}{3})^2\cdot(\frac{4(1-(2^4\cdot (2^{\lfloor k \rfloor})^{2})^{-1})}{3})) \\
	& = \mathcal{O}(\frac{20|X|^2+20|X|}{2} + (\lfloor k \rfloor + 2) \cdot 20|X|^2+\frac{16|X|^3}{9}\cdot(\frac{4(1-(2^4\cdot (\frac{|X|}{3})^{2})^{-1})}{3})) \\
	& = \mathcal{O}(\frac{20|X|^2+20|X|}{2} + (\lfloor k \rfloor + 2) \cdot 20|X|^2+\frac{64|X|^3}{27}\cdot(1-\frac{9}{16|X|^2})) \\
	& = \mathcal{O}(\frac{20|X|^2+20|X|}{2} + (\lg(\frac{|X|}{3}) + 2) \cdot 20|X|^2+\frac{64|X|^3}{27}-\frac{576|X|}{432}) \\
	& = \mathcal{O}(|X|^3).
\end{align*}

\subsection{Permanent popular outcome.}
We shall define the permanent popular outcome $\pi_{pp}$ in this section. In this outcome every agent is assigned a room with its closest neighbors. That is, for any outcome $\pi$ and an arbitrary agent $a$, we have that $\delta(a, \pi_{pp}(a)) \leq \delta(a, \pi(a))$ and $\delta(a, \pi_S(a)) \leq \delta(a, \pi(a))$. Additionally, a permanent popular outcome always exists independent of $I$ having a solution.

We shall describe how the agents are assigned to the rooms separately for each layer to construct the permanent popular outcome $\pi_{pp}$.

\subsubsection{Bottom layer.}
We shall first construct the rooms of the bottom layer. Let $u_i$ be an element-agent and $t$ be a set-agent or bending point agent such that $\{u_i, t\} \in E'$. For the chain $A_{\{u_i, t\}}$, w.l.o.g. assume that $t = \gamma_{\{u_i, t\}}[\hat{n}]$. We shall create the rooms $\{\alpha_{\{u_i, t\}}[z], \beta_{\{u_i, t\}}[z], \gamma_{\{u_i, t\}}[z]\} \in \pi_{pp}$ for $z \in [\hat{n}]$. 

Let $w^i_j$ be a set-agent and $f$ be a element-agent or bending point agent such that $\{f,w^i_j\} \in E'$. For the chain $A_{\{f, w^i_j\}}$, w.l.o.g. assume that $f = \gamma_{\{f, w^i_j\}}[0]$ and $w^i_j = \gamma_{\{f,w^i_j\}}[\hat{n}]$. We shall create the rooms $\{\alpha_{\{u_i, t\}}[z], \beta_{\{u_i, t\}}[z], \gamma_{\{u_i, t\}}[z]\} \in \pi_{pp}$ for $z \in [\hat{n}]$. 

See \cref{ppgridgraphrooms} for an illustration.

\begin{figure}
	\input{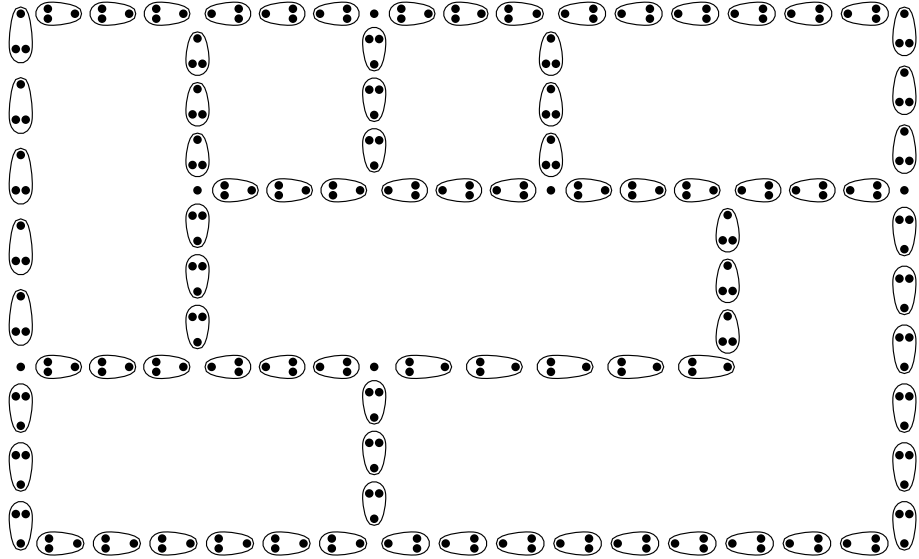}
	\caption{Illustration of the rooms in the bottom layer from \cref{gridgraphfrompcx3cchains} in the permanent popular outcome.}
	\label{ppgridgraphrooms}
\end{figure}

\subsubsection{Ascending layer.}
For the ascending layer, let $u_i$ be an element-agent with corresponding leaf $l_i$ and auxiliary agents $u_i', l_i'$. For chain $A_{\{u_i,u'_i\}}$, w.l.o.g. assume that $u_i = \gamma_{\{u_i,u'_i\}}[0]$ and $u'_i = \gamma_{\{u_i,u'_i\}}[\hat{n}]$. We shall create the rooms $\{\gamma_{\{u_i, u'_i\}}[z-1], \alpha_{\{u_i, u'_i\}}[z], \beta_{\{u_i, u'_i\}}[z]\} \in \pi_{pp}$ for $z \in [\hat{n}]$. 

For chain $A_{\{u'_i,l'_i\}}$, w.l.o.g. assume that $u'_i = \gamma_{\{u'_i,l'_i\}}[0]$ and $l'_i = \gamma_{\{u'_i,l'_i\}}[\hat{n}]$. We shall create the rooms $\{\gamma_{\{u'_i, l'_i\}}[z-1], \alpha_{\{u'_i, l'_i\}}[z], \beta_{\{u'_i, l'_i\}}[z]\} \in \pi_{pp}$ for $z \in [\hat{n}]$. 

For chain $A_{\{l'_i,l_i\}}$, w.l.o.g. assume that $l'_i = \gamma_{\{l'_i,l_i\}}[0]$ and $l_i = \gamma_{\{l'_i,l_i\}}[\hat{n}]$. We shall create the rooms $\{\gamma_{\{l'_i, l_i\}}[z-1], \alpha_{\{l'_i, l_i\}}[z], \beta_{\{l'_i, l_i\}}[z]\} \in \pi_{pp}$ for $z \in [\hat{n}]$. 

See \cref{ppascendingrooms} for an illustration.

\begin{figure}
\resizebox{\textwidth}{!}{%
	\begin{tikzpicture}[-,auto, semithick]
		\tikzset{
			element/.style={
				fill=black,draw=black,text=black,shape=circle,inner sep=1pt,minimum size=1pt
			},
			set/.style={
				fill=white,draw=white,text=black,shape=circle
			},halo/.style={line join=round,
			double,line cap=round,double distance=#1,shorten >=-#1/2,shorten <=-#1/2},
			halo/.default=0.7em
		}
		
		\node[element, label= right:{$l_i$}] 	(4) at (0,6) {};
		\node[element, label= below:{$u_i$}] 	(1) at (10,0) {};
		
		\node[element, label= below:{$l'_i$}] 	(3) at (0,2) {};
		\node[element, label= above:{$u'_i$}] 	(2) at (10,2) {};
		
		\node[element] (1-1) at (10.1,1/3) {};
		\node[element] (1-2) at (10.2,2/3) {};
		\node[element] (1-3) at (10.3,3/3) {};
		\node[element] (1-4) at (10.2,4/3) {};
		\node[element] (1-5) at (10.1,5/3) {};
		
		\node[element] (2-23) at (5/12, 2-0.1) {};
		\node[element] (2-22) at (10/12,2-0.2) {};
		\node[element] (2-21) at (15/12,2-0.3) {};
		\node[element] (2-20) at (20/12,2-0.4) {};
		\node[element] (2-19) at (25/12,2-0.5) {};
		\node[element] (2-18) at (30/12,2-0.6) {};
		\node[element] (2-17) at (35/12,2-0.7) {};
		\node[element] (2-16) at (40/12,2-0.8) {};
		\node[element] (2-15) at (45/12,2-0.9) {};
		\node[element] (2-14) at (50/12,2-1) {};
		\node[element] (2-13) at (55/12,2-1.1) {};
		\node[element] (2-12) at (60/12,2-1.2) {};
		\node[element] (2-11) at (65/12,2-1.1) {};
		\node[element] (2-10) at (70/12,2-1) {};
		\node[element] (2-9) at (75/12,2-0.9) {};
		\node[element] (2-8) at (80/12,2-0.8) {};
		\node[element] (2-7) at (85/12,2-0.7) {};
		\node[element] (2-6) at (90/12,2-0.6) {};
		\node[element] (2-5) at (95/12,2-0.5) {};
		\node[element] (2-4) at (100/12,2-0.4) {};
		\node[element] (2-3) at (105/12,2-0.3) {};
		\node[element] (2-2) at (110/12,2-0.2) {};
		\node[element] (2-1) at (115/12,2-0.1) {};
		
		\node[element] (3-1) at (0-0.1,2.4) {};
		\node[element] (3-2) at (0-0.2,2.8) {};
		\node[element] (3-3) at (0-0.3,3.2) {};
		\node[element] (3-4) at (0-0.4,3.6) {};
		\node[element] (3-5) at (0-0.5,4) {};
		\node[element] (3-6) at (0-0.4,4.4) {};
		\node[element] (3-7) at (0-0.3,4.8) {};
		\node[element] (3-8) at (0-0.2,5.2) {};
		\node[element] (3-9) at (0-0.1,5.6) {};
		
		\draw[-,ultra thick] (-1.2,-0.5)--(10.2,-0.5) node[right]{$xy$-plane};
		\draw[->,ultra thick] (-1.2,-0.5)--(-1.2,6.1) node[above]{$z$};
		
		\begin{scope}[on background layer]     
			\draw[halo] (1) -- (1-1);
			\draw[halo] (1-2) -- (1-3);
			\draw[halo] (1-4) -- (1-5);
			
			\draw[halo] (2) -- (2-1);
			\draw[halo] (2-2) -- (2-3);
			\draw[halo] (2-4) -- (2-5);
			\draw[halo] (2-6) -- (2-7);
			\draw[halo] (2-8) -- (2-9);
			\draw[halo] (2-10) -- (2-11);
			\draw[halo] (2-12) -- (2-13);
			\draw[halo] (2-14) -- (2-15);
			\draw[halo] (2-16) -- (2-17);
			\draw[halo] (2-18) -- (2-19);
			\draw[halo] (2-20) -- (2-21);
			\draw[halo] (2-22) -- (2-23);
			
			\draw[halo] (3) -- (3-1);
			\draw[halo] (3-2) -- (3-3);
			\draw[halo] (3-4) -- (3-5);
			\draw[halo] (3-6) -- (3-7);
			\draw[halo] (3-8) -- (3-9);
		\end{scope}
	\end{tikzpicture}
}
\caption{Illustration of the rooms in the ascending layer in the permanent popular outcome.}
\label{ppascendingrooms}
\end{figure}

\subsubsection{Top layer.}
Finally, for the top layer, the rooms depend on the parity of $\lfloor k \rfloor$. The construction of the rooms starts at the leaves and ``crawls up'' towards the center agents. Let us first consider the case where $\lfloor k \rfloor$ is odd. 

Let $j \in \{j' \in [k] | j' \mod 2 = 1\}$ be chosen arbitrarily. Let $e$ be an edge with endpoints $d_{i,j}^{l',1}$ and $d_{i,j-1}^{l,p}$, where $l' \in \{2l-1, 2l\}$ and $p \in \{2,3\}$. W.l.o.g. assume that $d_{i,j}^{l',1} = \gamma_e[0]$. We shall create the rooms $\{\gamma_e[z-1], \alpha_e[z], \beta_e[z]\} \in \pi_{pp}$ for $z \in [\hat{n}]$ and room $\{d_{i,j-1}^{l',1}, d_{i,j-1}^{l',2}, d_{i,j-1}^{l',3}\}$.

Let $j \in \{j' \in [k-1] | j' \mod 2 = 1\}$ be chosen arbitrarily. Let $e$ be an edge with endpoints $d_{i,j}^{l,p}$ and $d_{i,j+1}^{l',1}$, where $l' \in \{2l-1, 2l\}$ and $p \in \{2,3\}$. W.l.o.g. assume that $d_{i,j}^{l,p} = \gamma_e[0]$. We shall create the rooms $\{\gamma_e[z-1], \alpha_e[z], \beta_e[z]\} \in \pi_{pp}$ for $z \in [\hat{n}]$.

Let us now consider the case where $\lfloor k \rfloor$ is even. First we create room $\{d_{1,0}^{1,1}, d_{2,0}^{1,1}, d_{3,0}^{1,1}\}$.

Let $j \in \{j' \in [k] | j' \mod 2 = 0\}$ be chosen arbitrarily. Let $e$ be an edge with endpoints $d_{i,j}^{l',1}$ and $d_{i,j-1}^{l,p}$, where $l' \in \{2l-1, 2l\}$ and $p \in \{2,3\}$. W.l.o.g. assume that $d_{i,j}^{l',1} = \gamma_e[0]$. We shall create the rooms $\{\gamma_e[z-1], \alpha_e[z], \beta_e[z]\} \in \pi_{pp}$ for $z \in [\hat{n}]$ and room $\{d_{i,j-1}^{l',1}, d_{i,j-1}^{l',2}, d_{i,j-1}^{l',3}\}$.

Let $j \in \{j' \in [0,k-1] | j' \mod 2 = 0\}$ be chosen arbitrarily. Let $e$ be an edge with endpoints $d_{i,j}^{l,p}$ and $d_{i,j+1}^{l',1}$, where $l' \in \{2l-1, 2l\}$ and $p \in \{2,3\}$. W.l.o.g. assume that $d_{i,j}^{l,p} = \gamma_e[0]$. We shall create the rooms $\{\gamma_e[z-1], \alpha_e[z], \beta_e[z]\} \in \pi_{pp}$ for $z \in [\hat{n}]$. See \cref{toplayerppoutcome} for an illustration.
\begin{figure}
	\centering
	\input{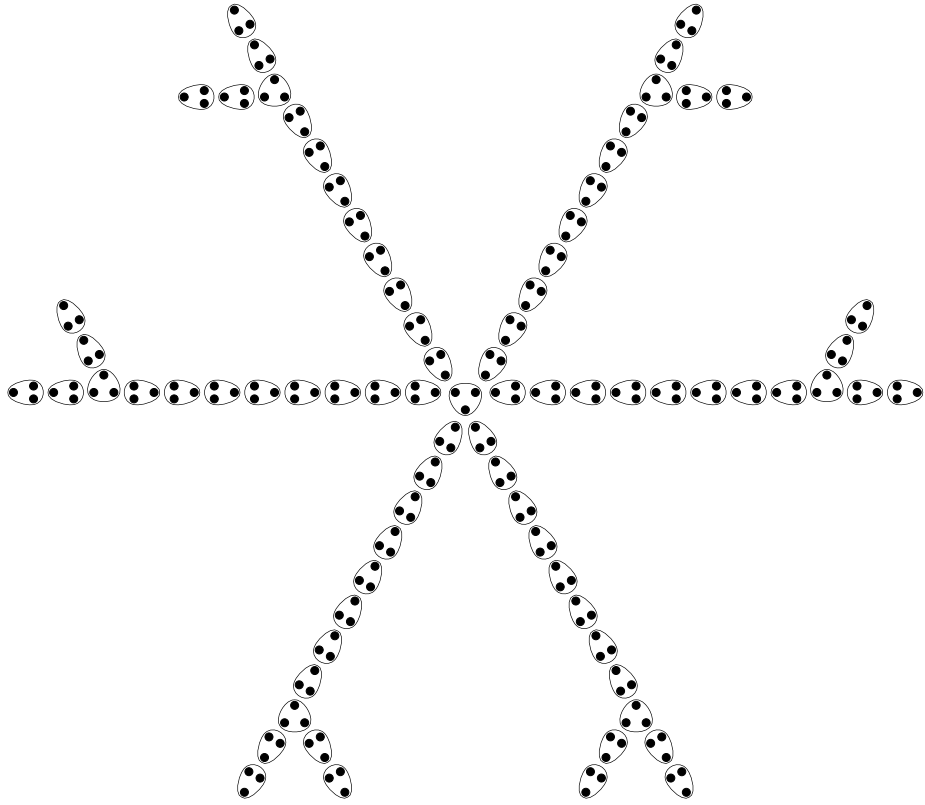}
	\caption{Illustration of the rooms in the top layer in the permanent popular outcome.}
	\label{toplayerppoutcome}
\end{figure}

\subsection{Reduced outcome.}
In this section, we shall define the reduced outcome $\pi_S$, where $S \subseteq C$ is a solution of $I$. Like the permanent popular outcome, the reduced outcome will assign every agent to a room with its closest neighbors. Additionally, a reduced outcome exists if and only if $I$ has a solution.

We shall describe how the agents are assigned to the rooms separately for each layer to construct the reduced outcome $\pi_S$.

\subsubsection{Bottom layer.} For the bottom layer, we shall use solution $S$ to assign every agent in the bottom layer to a room. These rooms only contain agents from the bottom layer.

Let $C_j \in S$ and $i \in C_j$. First, for each $C_j \in S$, we shall create the room $\{w_j^i | i \in C_j\}$.

We shall first consider the chains from the element-agent $u_i$ towards the set-agent $w^i_j$ corresponding to the 3-set $C_j \in S$ that contains the element $i$ in the solution. Let $t$ be a set-agent or bending point agent such that $\{u_i, t\}\in E'$. For chain $A_{\{u_i, t\}}$, assume w.l.o.g. that $u_i = \gamma_{\{u_i, t\}}[0]$. We shall create the rooms $\{\alpha_{\{u_i, t\}}[z], \beta_{\{u_i, t\}}[z], \gamma_{\{u_i, t\}}[z-1]\}$ for $z \in [\hat{n}]$. 

Let $f$ be a element-agent or bending point agent such that $\{f, w^i_j\} \in E'$. For the chain $A_{\{f, w^i_j\}}$, assume w.l.o.g. that $f = \gamma_{\{f, w^i_j\}}[0]$. We shall create the rooms $\{\alpha_{\{f, w^i_j\}}[z], \beta_{\{f, w^i_j\}}[z], \gamma_{\{f, w^i_j\}}[z-1]\}$ for $z \in [\hat{n}]$.

Let $C_{j'} \in C$ be a 3-set such that $C_{j'} \notin S$. We shall consider the chains from the element-agent $u_i$, where $i \in C_{j'}$, towards the set-agent $w^i_{j'}$.  Let $t$ be a set-agent or bending point agent such that $\{u_i, t\}\in E'$. For chain $A_{\{u_i, t\}}$, assume w.l.o.g. that $t = \gamma_{\{u_i, t\}}[\hat{n}]$. We shall create the rooms $\{\alpha_{\{u_i, t\}}[z], \beta_{\{u_i, t\}}[z], \gamma_{\{u_i, t\}}[z]\}$ for $z \in [\hat{n}]$. 

Let $f$ be a element-agent or bending point agent such that $\{f, w^i_{j'}\} \in E'$. For the chain $A_{\{f, w^i_{j'}\}}$, assume w.l.o.g. that $w^i_{j'} = \gamma_{\{f, w^i_{j'}\}}[\hat{n}]$. We shall create the rooms $\{\alpha_{\{f, w^i_{j'}\}}[z], \beta_{\{f, w^i_{j'}\}}[z], \gamma_{\{f, w^i_j\}}[z]\}$ for $z \in [\hat{n}]$.

See \cref{reducedgridgraphrooms} for an illustration.

\begin{figure}
	\input{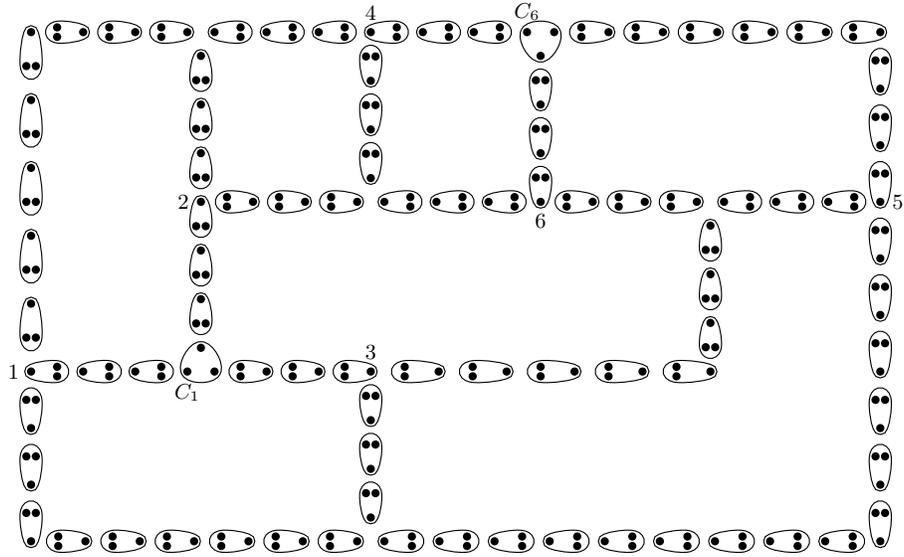}
	\caption{Illustration of the rooms in the bottom layer from \cref{gridgraphfrompcx3cchains} in the reduced outcome, where the solution is $S= \{\{1,2,3\}, \{4,5,6\}\} = \{C_1, C_6\}$.}
	\label{reducedgridgraphrooms}
\end{figure}

\subsubsection{Ascending layer.} 
For the ascending layer, let $u_i$ be an element-agent with corresponding leaf $l_i$ and auxiliary agents $u_i', l_i'$. For chain $A_{\{u_i,u'_i\}}$, w.l.o.g. assume that $u_i = \gamma_{\{u_i,u'_i\}}[0]$ and $u'_i = \gamma_{\{u_i,u'_i\}}[\hat{n}]$. We shall create the rooms $\{\alpha_{\{u_i, u'_i\}}[z], \beta_{\{u_i, u'_i\}}[z], \gamma_{\{u_i, u'_i\}}[z]\} \in \pi_{pp}$ for $z \in [\hat{n}]$. 

For chain $A_{\{u'_i,l'_i\}}$, w.l.o.g. assume that $u'_i = \gamma_{\{u'_i,l'_i\}}[0]$ and $l'_i = \gamma_{\{u'_i,l'_i\}}[\hat{n}]$. We shall create the rooms $\{\alpha_{\{u'_i, l'_i\}}[z], \beta_{\{u'_i, l'_i\}}[z], \gamma_{\{u'_i, l'_i\}}[z]\} \in \pi_{pp}$ for $z \in [\hat{n}]$. 

For chain $A_{\{l'_i,l_i\}}$, w.l.o.g. assume that $l'_i = \gamma_{\{l'_i,l_i\}}[0]$ and $l_i = \gamma_{\{l'_i,l_i\}}[\hat{n}]$. We shall create the rooms $\{\alpha_{\{l'_i, l_i\}}[z], \beta_{\{l'_i, l_i\}}[z], \gamma_{\{l'_i, l_i\}}[z]\} \in \pi_{pp}$ for $z \in [\hat{n}]$. 

See \cref{reducedascendingrooms} for an illustration

\begin{figure}
	\resizebox{\textwidth}{!}{%
	\begin{tikzpicture}[-,auto, semithick]
		\tikzset{
			element/.style={
				fill=black,draw=black,text=black,shape=circle,inner sep=1pt,minimum size=1pt
			},
			set/.style={
				fill=white,draw=white,text=black,shape=circle
			},halo/.style={line join=round,
				double,line cap=round,double distance=#1,shorten >=-#1/2,shorten <=-#1/2},
			halo/.default=0.7em
		}
		
		\node[element, label= right:{$l_i$}] 	(4) at (0,6) {};
		\node[element, label= below:{$u_i$}] 	(1) at (10,0) {};
		
		\node[element, label= below:{$l'_i$}] 	(3) at (0,2) {};
		\node[element, label= above:{$u'_i$}] 	(2) at (10,2) {};
		
		\node[element] (1-1) at (10.1,1/3) {};
		\node[element] (1-2) at (10.2,2/3) {};
		\node[element] (1-3) at (10.3,3/3) {};
		\node[element] (1-4) at (10.2,4/3) {};
		\node[element] (1-5) at (10.1,5/3) {};
		
		\node[element] (2-23) at (5/12, 2-0.1) {};
		\node[element] (2-22) at (10/12,2-0.2) {};
		\node[element] (2-21) at (15/12,2-0.3) {};
		\node[element] (2-20) at (20/12,2-0.4) {};
		\node[element] (2-19) at (25/12,2-0.5) {};
		\node[element] (2-18) at (30/12,2-0.6) {};
		\node[element] (2-17) at (35/12,2-0.7) {};
		\node[element] (2-16) at (40/12,2-0.8) {};
		\node[element] (2-15) at (45/12,2-0.9) {};
		\node[element] (2-14) at (50/12,2-1) {};
		\node[element] (2-13) at (55/12,2-1.1) {};
		\node[element] (2-12) at (60/12,2-1.2) {};
		\node[element] (2-11) at (65/12,2-1.1) {};
		\node[element] (2-10) at (70/12,2-1) {};
		\node[element] (2-9) at (75/12,2-0.9) {};
		\node[element] (2-8) at (80/12,2-0.8) {};
		\node[element] (2-7) at (85/12,2-0.7) {};
		\node[element] (2-6) at (90/12,2-0.6) {};
		\node[element] (2-5) at (95/12,2-0.5) {};
		\node[element] (2-4) at (100/12,2-0.4) {};
		\node[element] (2-3) at (105/12,2-0.3) {};
		\node[element] (2-2) at (110/12,2-0.2) {};
		\node[element] (2-1) at (115/12,2-0.1) {};
		
		\node[element] (3-1) at (0-0.1,2.4) {};
		\node[element] (3-2) at (0-0.2,2.8) {};
		\node[element] (3-3) at (0-0.3,3.2) {};
		\node[element] (3-4) at (0-0.4,3.6) {};
		\node[element] (3-5) at (0-0.5,4) {};
		\node[element] (3-6) at (0-0.4,4.4) {};
		\node[element] (3-7) at (0-0.3,4.8) {};
		\node[element] (3-8) at (0-0.2,5.2) {};
		\node[element] (3-9) at (0-0.1,5.6) {};
		
		\draw[-,ultra thick] (-1.2,-0.5)--(10.2,-0.5) node[right]{$xy$-plane};
		\draw[->,ultra thick] (-1.2,-0.5)--(-1.2,6.1) node[above]{$z$};
		
		\begin{scope}[on background layer]     
			\draw[halo] (1-1) -- (1-2);
			\draw[halo] (1-4) -- (1-3);
			\draw[halo] (2) -- (1-5);
			
			\draw[halo] (2-1) -- (2-2);
			\draw[halo] (2-4) -- (2-3);
			\draw[halo] (2-6) -- (2-5);
			\draw[halo] (2-8) -- (2-7);
			\draw[halo] (2-10) -- (2-9);
			\draw[halo] (2-12) -- (2-11);
			\draw[halo] (2-14) -- (2-13);
			\draw[halo] (2-16) -- (2-15);
			\draw[halo] (2-18) -- (2-17);
			\draw[halo] (2-20) -- (2-19);
			\draw[halo] (2-22) -- (2-21);
			\draw[halo] (3) -- (2-23);
			
			\draw[halo] (3-1) -- (3-2);
			\draw[halo] (3-4) -- (3-3);
			\draw[halo] (3-6) -- (3-5);
			\draw[halo] (3-8) -- (3-7);
			\draw[halo] (4) -- (3-9);
		\end{scope}
	\end{tikzpicture}
}
	\caption{Illustration of the rooms in the ascending layer in the reduced outcome.}
	\label{reducedascendingrooms}
\end{figure}

\subsubsection{Top layer.}
Similar to the permanent popular outcome, the rooms of the top layer in the reduced outcome depends on the parity of $\lfloor k \rfloor$. Let us first consider the case where $\lfloor k \rfloor$ is even.

Let $j \in \{j' \in [k] | j' \mod 2 = 0\}$ be chosen arbitrarily. Let $e$ be an edge with endpoints $d_{i,j}^{l',1}$ and $d_{i,j-1}^{l,p}$, where $l' \in \{2l-1, 2l\}$ and $p \in \{2,3\}$. W.l.o.g. assume that $d_{i,j}^{l',1} = \gamma_e[0]$ and $d_{i,j-1}^{l,p} = \gamma_e[\hat{n}]$. We shall create the rooms $\{\alpha_e[z], \beta_e[z], \gamma_e[z]\} \in \pi_{S}$ for $z \in [\hat{n}]$.

Let $j \in \{j' \in [0, k-1] | j' \mod 2 = 0\}$ be chosen arbitrarily. Let $e$ be an edge with endpoints $d_{i,j}^{l,p}$ and $d_{i,j+1}^{l',1}$, where $l' \in \{2l-1, 2l\}$ and $p \in \{2,3\}$. W.l.o.g. assume that $d_{i,j+1}^{l',1} = \gamma_e[0]$. We shall create the rooms $\{\gamma_e[z-1], \alpha_e[z], \beta_e[z]\} \in \pi_{pp}$ for $z \in [\hat{n}]$ and room $\{d_{i,j}^{l,1}, d_{i,j}^{l,2}, d_{i,j}^{l,3}\}$.

Let us now consider the case where $\lfloor k \rfloor$ is odd. First we create room $\{d_{1,0}^{1,1}, d_{2,0}^{1,1}, d_{3,0}^{1,1}\}$.

Let $j \in \{j' \in [k] | j' \mod 2 = 1\}$ be chosen arbitrarily. Let $e$ be an edge with endpoints $d_{i,j}^{l',1}$ and $d_{i,j-1}^{l,p}$, where $l' \in \{2l-1, 2l\}$ and $p \in \{2,3\}$. W.l.o.g. assume that $d_{i,j-1}^{l,p} = \gamma_e[\hat{n}]$. We shall create the rooms $\{\alpha_e[z], \beta_e[z], \gamma_e[z]\} \in \pi_{S}$ for $z \in [\hat{n}]$.

Let $j \in \{j' \in [k-1] | j' \mod 2 = 1\}$ be chosen arbitrarily. Let $e$ be an edge with endpoints $d_{i,j}^{l,p}$ and $d_{i,j+1}^{l',1}$, where $l' \in \{2l-1, 2l\}$ and $p \in \{2,3\}$. W.l.o.g. assume that $d_{i,j+1}^{l',1} = \gamma_e[0]$. We shall create the rooms $\{\gamma_e[z-1], \alpha_e[z], \beta_e[z]\} \in \pi_{pp}$ for $z \in [\hat{n}]$ and room $\{d_{i,j}^{l,1}, d_{i,j}^{l,2}, d_{i,j}^{l,3}\}$.

See \cref{toplayerreducedoutcome} for an illustration.

\begin{figure}
	\centering
	\input{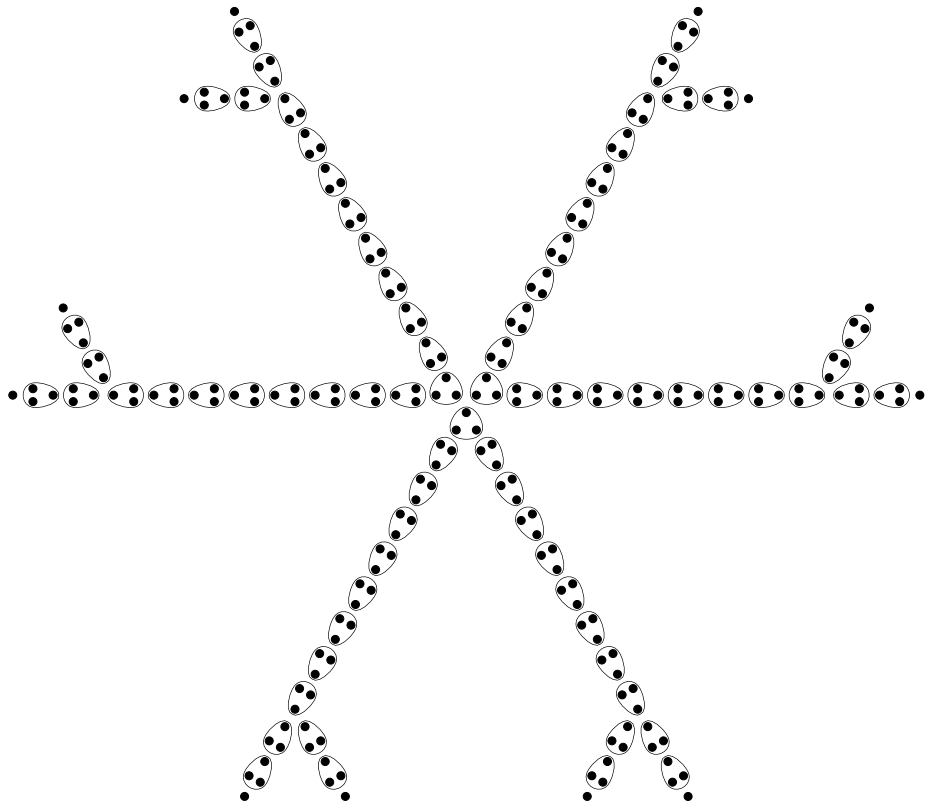}
	\caption{Illustration of the rooms in the top layer in the reduced outcomes.}
	\label{toplayerreducedoutcome}
\end{figure}

\subsection{Hardness}
We shall show that determining the existence of a strictly popular outcome for a 3D-EuclidSR game $G$ is co-NP-hard. The co-NP-hardness is demonstrated by proving that the original PC-X3C instance $I = (X,C)$ has a solution $S \subseteq C$ if and only if $G$ has no strictly popular outcome.

\begin{remark}
	Any edge in the construction in \cref{strictpopconstr} has length at least 10.
\end{remark}

\begin{remark}
	Let $\pi$ be an arbitrary outcome of $G$. For an agent $a \in \{\alpha_e[z], \beta_e[z] | z\in [\hat{n}]\}$, we have that $\delta(a, \pi(a)) \geq 1 + \epsilon$. This follows from the construction in \cref{strictpopconstr}
\end{remark}
\begin{remark}
	For any agent $a \notin \{\alpha_e[z], \beta_e[z] | z\in [\hat{n}]\}$, we have that $\delta(a, \pi(a)) \geq 2$. This follows from the construction in \cref{strictpopconstr}
\end{remark}
\begin{remark}
	Let $A_e$ be an arbitrary chain from the construction in \cref{strictpopconstr} and let $\pi$ be an arbitrary outcome of $G$. For an agent $a \in \{\alpha_e[z], \beta_e[z] | z\in [\hat{n}]\}$, we have that $\delta(a, \pi(a)) = 1 + \epsilon$ if and only if $\pi(a) = \{\alpha_e[z], \beta_e[z], \gamma_e[z]\}$ or $\pi(a) = \{\alpha_e[z], \beta_e[z], \gamma_e[z-1]\}$. 
	
	That is, the most preferred room assignment of $\alpha_e[z]$ (resp. $\beta_e[z]$) is to be put together with $\beta_e[z]$ (resp. $\alpha_e[z]$) and $\gamma_e[z']$, where $z' = z$ or $z' = z-1$.
\end{remark}
\begin{remark}
	Let $A_e$ be an arbitrary chain from the construction in \cref{strictpopconstr} and let $\pi$ be an arbitrary outcome of $G$. For any agent $a \notin \{\alpha_e[z], \beta_e[z] | z\in [\hat{n}]\}$ we have that $\delta(a, \pi(a)) = 2$ if and only if $\pi(a) = \{\alpha_e[z], \beta_e[z], \gamma_e[z]\}$ or $\pi(a) = \{\alpha_e[z], \beta_e[z], \gamma_e[z-1]\}$ or $\pi(a) = \{w_j^{i_1}, w_j^{i_2}, w_j^{i_3}\}$ or $\pi(a) = \{d^{l,1}_{i,j}, d^{l,2}_{i,j}, d^{l,3}_{i,j}\}$ or $\pi(a) = \{d^{1,1}_{1,0}, d^{1,2}_{1,0}, d^{1,3}_{1,0}\}$.
	
	That is, the most preferred room assignment of $\gamma_e[z]$ is to be put together with $\alpha_e[z]$ and $\beta_e[z]$ or, if $z >1$, with $\alpha_e[z]$ and $\beta_e[z-1]$. 
	
	In the bottom layer, we have an additional most preferred room in the case where $\gamma_e[z]$ is a set-agent. This room consists of the set-agents that form an equilateral triangle, i.e., $\{w_j^{i_1}, w_j^{i_2}, w_j^{i_3}\}$.
	
	In the top layer, we also have an additional most preferred room for $\gamma_e[z]$. In the case that $\gamma_e[z]$ is an agent from the internal vertex gadget, the additional most preferred room consist of the agents that form an equilateral triangle with $\gamma_e[z]$, i.e., $\{d^{l,1}_{i,j}, d^{l,2}_{i,j}, d^{l,3}_{i,j}\}$.
	
	Finally, the most preferred rooms of a center agent $a$ consists of the other center agents with which it forms an equilateral triangle, i.e., $\{d^{1,1}_{1,0}, d^{1,2}_{1,0}, d^{1,3}_{1,0}\}$, or the vertices from the internal vertex gadget with which it also forms an equilateral triangle, i.e., $\{d^{l,1}_{i,j}, d^{l,2}_{i,j}, d^{l,3}_{i,j}\}$.
\end{remark}

\begin{lemma}\label{lemkeven}
	Let $\pi$ be a outcome of $G$ that assigns all agents to its most preferred room and assume that $\lfloor k \rfloor$ is even. If $\{d^{1,1}_{1,0}, d^{1,2}_{1,0}, d^{1,3}_{1,0}\} \in \pi$, then $\pi = \pi_{pp}$. Otherwise $\pi = \pi_S$ for some solution $S$ of $I$.
\end{lemma}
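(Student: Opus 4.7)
The plan is to exploit the fact that, under the hypothesis, every chain $A_e$ in $G$ is forced into one of exactly two configurations. From the remarks, the only most-preferred room for $\alpha_e[z]$ and $\beta_e[z]$ is $\{\alpha_e[z],\beta_e[z],\gamma_e[z-1]\}$ or $\{\alpha_e[z],\beta_e[z],\gamma_e[z]\}$, so the chain is a sequence of triples either ``left-shifted'' or ``right-shifted''; the choice is equivalent to specifying which of the two endpoint vertices $\gamma_e[0]$ or $\gamma_e[\hat n]$ is released from the chain to participate in a junction room. At every junction agent (a center agent, an internal-vertex triple $\{d^{l,1}_{i,j},d^{l,2}_{i,j},d^{l,3}_{i,j}\}$, a set triple $\{w_j^{i_1},w_j^{i_2},w_j^{i_3}\}$, an element agent $u_i$, a bending-point agent, or an auxiliary $u_i',l_i'$), I would enumerate the most-preferred rooms and derive a local ``propagation rule'': the agent is either consumed by exactly one incident chain (and the rest must not touch it) or it sits in the distinguished triangle room (and then every incident chain must release it). This yields a simple binary variable per chain and a binary variable per junction triple, subject to consistency constraints.

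For the top layer, I would start at the root triangle. In the case $\{d^{1,1}_{1,0},d^{1,2}_{1,0},d^{1,3}_{1,0}\}\in\pi$ I would argue by induction on depth $j$: this hypothesis forces the depth-$0$ in-tree gadget triples not to be together, which in turn forces the chains to depth $1$ to release their depth-$1$ endpoints, forcing the depth-$1$ triples to be together, and so on. Hence the depth-$j$ in-tree triples are together iff $j$ is odd, which is exactly the pattern prescribed by $\pi_{pp}$ for even $\lfloor k\rfloor$ (the balanced binary tree gadget extends the depth by a further $2$, preserving parity). Because $\lfloor k\rfloor$ is even, the leaves $l_i$ lie at an even depth and are matched by the chain to their parent; propagating through the ascending layer, chain $A_{\{l_i',l_i\}}$ releases $l_i$, hence matches $l_i'$; then $A_{\{u_i',l_i'\}}$ releases $l_i'$, hence matches $u_i'$; then $A_{\{u_i,u_i'\}}$ releases $u_i'$, hence matches $u_i$. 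So every $u_i$ is consumed by its ascending chain, every bottom-layer chain leaving $u_i$ releases $u_i$, and propagating through the bending-point agents (each matched by exactly one of its two chains) forces every $w_j^i$ to be released by its chain, so every set triple is together. This matches $\pi_{pp}$ in every layer.

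In the opposite case, each center agent $d^{1,1}_{i,0}$ must be placed in its alternative most-preferred room, namely the in-tree triple $\{d^{1,1}_{i,0},d^{1,2}_{i,0},d^{1,3}_{i,0}\}$. The same induction runs with flipped parity, so depth-$j$ triples are together iff $j$ is even; at even $\lfloor k\rfloor$ the leaves are now released by their parent chain and must be consumed by the ascending chain. Cascading down gives the opposite ascending-layer configuration and makes each $u_i$ free from its ascending chain, so $u_i$ must be consumed by exactly one of its three bottom-layer chains, say the one leading (possibly through bending points) to $w^i_j$. Propagating through the bending points shows that the $w^i_j$-end of that chain releases $w^i_j$, so $w^i_j$ must lie in the set triple $\{w_j^{i_1},w_j^{i_2},w_j^{i_3}\}$; this is only consistent if all three elements of $C_j$ have chosen $C_j$. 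Defining $S:=\{C_j : \{w_j^{i_1},w_j^{i_2},w_j^{i_3}\}\in\pi\}$ therefore produces a subset that covers every element of $X$ exactly once, i.e.\ a solution of $I$, and comparing with the definition of $\pi_S$ shows $\pi=\pi_S$.

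The main obstacle is the careful bookkeeping of the ``direction'' of each chain: one must reconcile the arbitrary w.l.o.g.\ orientations used in the definitions of $\pi_{pp}$ and $\pi_S$ with the intrinsic choice of released endpoint, and make sure that the parity argument survives the balanced binary tree gadget insertion (whose extra depth $2$ preserves the alternation). A secondary subtlety is the bending-point propagation: since a bending-point agent has two incident chains and no triangle-room option, I must argue that exactly one of the two chains consumes it, so that the ``consumed/released'' bit at $u_i$ and at $w_j^i$ are equal and therefore my Case B identification of $S$ is well defined.
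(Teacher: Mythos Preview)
Your propagation strategy is precisely what the paper means; its own proof is the single line ``This follows from the above remarks,'' so you are filling in the intended argument and the approach matches.

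There is, however, one direction error in your Case~1 conclusion for the bottom layer. If every bottom-layer chain incident to $u_i$ releases $u_i$, then each such chain \emph{consumes} its far endpoint; when that endpoint is a bending-point agent $b$, the second chain at $b$ must release $b$ and therefore \emph{consume} $w_j^i$. Hence in this case every set-agent lies in a chain room and \emph{no} set triple $\{w_j^{i_1},w_j^{i_2},w_j^{i_3}\}$ is formed---the opposite of what you wrote, and indeed this is exactly how $\pi_{pp}$ is defined on the bottom layer. With that one sign fixed, Case~1 yields a unique outcome equal to $\pi_{pp}$; your Case~2 propagation and the extraction of the solution $S$ are correct as written. (You already flagged the chain-orientation bookkeeping as the main hazard; this is the one place it bit you.)
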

\begin{proof}
	This follows from the above remarks.
\end{proof}

\begin{lemma}\label{lemkodd}
	Let $\pi$ be a outcome of $G$ that assigns all agents to its most preferred room and assume that $\lfloor k \rfloor$ is odd. If $\{d^{1,1}_{1,0}, d^{1,2}_{1,0}, d^{1,3}_{1,0}\} \in \pi$, then $\pi = \pi_{S}$ for some solution $S$ of $I$. Otherwise $\pi = \pi_{pp}$.
\end{lemma}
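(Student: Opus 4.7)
The plan is to mirror the proof of \cref{lemkeven} with the parity of $\lfloor k\rfloor$ flipped throughout; by the preceding remarks the argument remains essentially rigid.

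First I would note that, by the chain-structure remarks, any chain $A_e$ in which every agent is in its most-preferred room admits exactly two configurations: either all of its triples take the form $\{\alpha_e[z],\beta_e[z],\gamma_e[z]\}$ (the chain is ``anchored'' at $\gamma_e[\hat n]$), or all take the form $\{\gamma_e[z-1],\alpha_e[z],\beta_e[z]\}$ (anchored at $\gamma_e[0]$). In either case exactly one endpoint of $e$ is consumed by the chain, and the other endpoint is freed to be matched in its unique remaining most-preferred room, which must be a triangular gadget (set-triangle, internal-vertex-gadget triangle, or center-triangle).

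Next I would trace the cascade starting from the center agents of the top layer. If $\{d_{1,0}^{1,1},d_{2,0}^{1,1},d_{3,0}^{1,1}\}\in\pi$, then each inner center vertex $d_{i,0}^{1,1}$ is already used, so the two outer internal-gadget vertices $d_{i,0}^{1,2},d_{i,0}^{1,3}$ must anchor their outgoing chains; this in turn frees the depth-$1$ children to be matched inside their own internal-vertex-gadget triangles, and so on down the tree. Because $\lfloor k\rfloor$ is odd, this alternation leaves the leaves of the top layer as chain-anchors for the ascending layer; propagating through the ascending and bottom layers, each set-triangle $\{w_j^{i_1},w_j^{i_2},w_j^{i_3}\}$ appears in $\pi$ exactly when its three incident bottom-layer chains are anchored at the element-agent end, which matches the construction of $\pi_S$. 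The exact-cover condition on $S=\{C_j:\{w_j^{i_1},w_j^{i_2},w_j^{i_3}\}\in\pi\}$ is enforced by the requirement that each element-agent $u_i$ anchors exactly one of its three incident bottom-layer chains, so $S$ partitions $X$ and is a solution of $I$. Symmetrically, if $\{d_{1,0}^{1,1},d_{2,0}^{1,1},d_{3,0}^{1,1}\}\notin\pi$, each $d_{i,0}^{1,1}$ must be matched in its own internal-vertex-gadget triangle, and the parity-flipped cascade, still anchored by the odd value of $\lfloor k\rfloor$, yields precisely the chain orientations defining $\pi_{pp}$.

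The main obstacle I anticipate is the bookkeeping at the interfaces between the top, ascending, and bottom layers, and at the balanced-binary-tree-gadget leaves, to confirm that the odd-parity cascade terminates consistently without leaving any vertex stranded and without forcing any chain into a non--most-preferred configuration. Since the remarks fully characterize every most-preferred room and each chain admits only the two configurations above, this reduces to the same style of case analysis used in \cref{lemkeven}, so no genuinely new difficulty arises beyond the parity shift.
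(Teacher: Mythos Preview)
Your proposal is correct and follows essentially the same approach as the paper, which simply states that the lemma follows from the preceding remarks; you merely spell out the chain-orientation cascade that those remarks encode, with the parity of $\lfloor k\rfloor$ flipped relative to \cref{lemkeven}.
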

\begin{proof}
This follows from the above remarks.
\end{proof}

From the above remarks, we have that if an outcome $\pi$ assigns all agents to its most preferred room, then $\pi= \pi_{pp}$ or $\pi= \pi_{S}$. Note that an outcome that assigns all agents to its most preferred room is popular. 

Since $\pi_{pp}$ exists independent of $I$ having a solution, any popular outcome of $G$ must assign all agents to its most preferred room.

Thus, if $I$ has no solution, then $\pi_{pp}$ is the only popular outcome and therefore a strictly popular outcome. Otherwise, we have multiple popular outcomes and therefore no strictly popular outcome. Therefore, the 3D-EuclidSR problem with the room size fixed to 3 is co-NP-hard.

\begin{theorem}
	The 3D-EuclidSR problem with the room size fixed to 3 is co-NP-hard.
\end{theorem}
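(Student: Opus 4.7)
The plan is to leverage the reduction $I \mapsto G$ already constructed, together with the structural characterization supplied by the Remarks and by Lemmas \ref{lemkeven} and \ref{lemkodd}, to establish that $I$ admits a solution if and only if $G$ has no strictly popular outcome. Since PC-X3C is NP-hard and the construction is polynomial in $|I|$ (as verified in the Instance size subsection), this yields co-NP-hardness of deciding the existence of a strictly popular outcome.

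First I would observe that any outcome $\pi$ which assigns every agent to one of its most preferred rooms is necessarily popular: by definition no agent can strictly prefer any other outcome $\pi'$ to $\pi$, so $N(\pi',\pi) = \emptyset$ and hence $\phi(\pi,\pi') \geq 0$ for every $\pi'$. Both $\pi_{pp}$ and, when $I$ has a solution $S$, the reduced outcome $\pi_S$, satisfy this property by construction. The structural lemmas (Lemmas \ref{lemkeven} and \ref{lemkodd}) provide the converse: they show that the \emph{only} outcomes placing every agent in a most preferred room are $\pi_{pp}$ and outcomes of the form $\pi_S$ arising from solutions $S$ of $I$, with the dichotomy determined by the parity of $\lfloor k \rfloor$ and by whether the center-agent room $\{d^{1,1}_{1,0}, d^{1,2}_{1,0}, d^{1,3}_{1,0}\}$ appears.

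With these pieces in place, the dichotomy is immediate. Suppose $I$ admits a solution $S$. Then $\pi_{pp}$ and $\pi_S$ are two distinct popular outcomes; any candidate strictly popular outcome $\pi$ would satisfy $\phi(\pi,\pi_{pp}) > 0$, contradicting popularity of $\pi_{pp}$, which forces $\phi(\pi_{pp},\pi) \geq 0$, i.e. $\phi(\pi,\pi_{pp}) \leq 0$. Hence no strictly popular outcome exists. Conversely, suppose $I$ has no solution; then $\pi_{pp}$ is the unique outcome placing every agent in a most preferred room. For any $\pi' \neq \pi_{pp}$, at least one agent lies outside its most preferred room under $\pi'$ and therefore strictly prefers $\pi_{pp}$ to $\pi'$, while no agent strictly prefers $\pi'$ to $\pi_{pp}$. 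Thus $\phi(\pi_{pp},\pi') > 0$ for every $\pi' \neq \pi_{pp}$, and $\pi_{pp}$ is strictly popular. The principal obstacle has already been resolved in the gadget construction and in the structural lemmas that pin down the set of outcomes reaching the ``everyone in a most preferred room'' ceiling; what remains is this compact translation between existence of an exact cover and multiplicity of popular outcomes.
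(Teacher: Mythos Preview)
Your proposal is correct and follows essentially the same approach as the paper: use the structural characterization from Lemmas~\ref{lemkeven} and~\ref{lemkodd} to show that the outcomes placing every agent in a most-preferred room are exactly $\pi_{pp}$ and (when a solution exists) the $\pi_S$, and then translate existence of an exact cover into multiplicity of popular outcomes, hence non-existence of a strictly popular one. The only cosmetic omission is that in the ``$I$ has a solution'' direction you check a candidate strictly popular $\pi$ only against $\pi_{pp}$; to cover the case $\pi = \pi_{pp}$ you need the symmetric check against $\pi_S$, which is immediate for the same reason.
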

\begin{proof}
	Follow from \cref{lemkeven,,lemkodd}.
\end{proof}
\section{Conclusion}
We have demonstrated that determining the existence of a strictly popular outcome for 3D-EuclidSR game is co-NP-hard when the room size is fixed to 3. 

We believe that our construction can be modified to show that the problem remains hard, when to room size is greater than 3. Additionally, we also believe that our construction can be modified to show that determining the existence of a popular outcome is co-NP-hard and that, under the assumption that P$\neq$NP, a mixed popular outcome cannot be computed in polynomial time. These are left for future research.

The complexity of computing and determining the existence of a popular outcome, when the room size is fixed to 2 is also yet unknown. We believe that the problem becomes tractable in this case.

Finally, it would be interesting to investigate whether it would be possible to obtain a stronger result by showing that determining the existence of a strictly popular outcome remains co-NP-hard, when the agents are assigned a point in 2-dimensional euclidean space, i.e., popularity on the Euclid-$d$-SR problem. It is also possible that the problem becomes tractable in this case.
\bibliographystyle{splncs04}
\bibliography{references} 
\end{document}